\documentclass[11pt,a4paper]{article}

\usepackage{jheppub}
\usepackage{amsthm,mathtools}
\usepackage{tikz}
\usetikzlibrary{calc}
\usepackage{tikz-3dplot}
\usepackage{float}
\usepackage{placeins}
\usepackage[shortlabels]{enumitem}
\usepackage[nameinlink]{cleveref}
\hypersetup{hypertexnames=false}

\newcommand{\HMEC}{C^{\rm HMEC}}

\newcommand{\MMI}{\mathrm{MMI}}

\DeclareMathOperator{\spanop}{span}

\newtheorem{definition}{Definition}[section]
\newtheorem{lemma}[definition]{Lemma}

\newtheorem{theorem}[definition]{Theorem}

\newcommand{\piast}{{\Pi^\ast}}

\title{The Holographic Multi-Entropy Cone}
\author[a,b]{Xin-Xiang Ju,}
\author[b]{and Yang Zhao}

\emailAdd{juxinxiang21@mails.ucas.ac.cn}
\emailAdd{zhaoyang20a@mails.ucas.ac.cn}

\affiliation[a]{Institute for Advanced Study, Tsinghua University, Beijing 100084, China}
\affiliation[b]{School of Physical Sciences, University of Chinese Academy of Sciences, Zhongguancun east road 80, Beijing 100190, China}

\abstract{
We generalize the holographic entropy cone (HEC) to the holographic
multi-entropy cone (HMEC) by adjoining multi-entropy coordinates to
the standard entropy coordinates. We show that holographic states, through
their multi-entropy vectors, form a rational polyhedral cone in the multi-entropy
space, and multicontraction maps provide exact certificates for
holographic multi-entropy inequalities (HMEIs). We determine all facet inequalities of the
\(n=3,4\) HMECs, where \(n\) includes the purifier. These inequalities are organized into seven fundamental
HMEI orbits: two for \(n=3\) and five for \(n=4\).
We further propose two structural conjectures: HEC facet inequalities are
convex combinations of HMEC facet inequalities, and HMEC facet inequalities obey a
balanced-but-not-too-balanced principle.
}

\begin{document}

\maketitle

\section{Introduction}
\label{sec:introduction}
A basic problem in holography is to understand which quantum states can give rise to an emergent semiclassical bulk geometry. Entanglement provides a sharp entry point into this problem: the Ryu--Takayanagi formula identifies boundary entanglement entropy with the area of a bulk minimal surface~\cite{Ryu:2006bv,Ryu:2006ef}, while the Hubeny--Rangamani--Takayanagi prescription extends this relation to covariant settings~\cite{Hubeny:2007xt}. More generally, holographic entanglement entropy has become a central tool for probing how bulk geometry is encoded in boundary entanglement structures ~\cite{Rangamani:2016dms}, consistent with the idea that spacetime connectivity is built from entanglement~\cite{VanRaamsdonk:2010pw}. From this viewpoint, holographic entropy inequalities (HEIs) constrain the entanglement patterns compatible with a classical geometric dual.

The holographic entropy cone (HEC)~\cite{BaoEtAlHEC} provides the convex-geometric organization of these entropy constraints. For a fixed number of boundary subsystems, one records the entanglement entropies of all proper nonempty boundary subsets into a holographic entropy vector; the set of such vectors forms a rational polyhedral cone in entropy space. Its facets give the tightest linear HEIs, with monogamy of mutual information (MMI)~\cite{Hayden:2011ag} being the first genuinely holographic example.
Since the original graph model and the contraction map proof method, the HEC program has developed along several complementary directions:
the complete five-region cone and its extreme ray structure
\cite{HernandezCuenca:2019wgh,AvisHernandezCuencaFoundations}, the organization
of entropy relations and entropy arrangements
\cite{HubenyRangamaniRotaEntropyRelations,HubenyRangamaniRotaEntropyArrangement},
repackaged and balanced forms of HEIs
\cite{HeHeadrickHubenyRepackaged,HeHubenyRangamaniSuperbalance}, covariant and
time-dependent extensions and tests
\cite{CzechDongTimeDependentHEC,GradoWhiteGrimaldiHeadrickHubenyTesting,
GradoWhiteGrimaldiHeadrickHubenyMinimax}, and recent progress on infinite
families, contraction map classification, and six-region computations
\cite{CzechShuaiWangZhangEWNTopology,CzechLiuYuTwoInfiniteFamilies,
BaoNaskarContractionProperties,
BaoFuruyaNaskarClassification,CzechWangN7Inequality,
HernandezCuencaHubenyJiaHEIs,HeHubenyRotaN6}.
In this framework, once its facets are known, the HEC gives the complete set of linear
entropy constraints imposed by classical bulk geometry dual.

In this paper we extend the HEC to the holographic multi-entropy cone (HMEC)
by adjoining multi-entropy coordinates to the standard entropy vector. Multi-entropy is the
multipartite analogue of entanglement entropy. In quantum information, it is
obtained by analytically continuing R\'enyi multi-entropies, defined through
higher-dimensional toric-lattice contractions of density-matrix tensors, to the
limit in which the replica index approaches 1~\cite{GaddeKrishnaSharmaMultiEntropy}.
In holography, its proposed geometric dual is a minimal-area soap-film or
brane-web surface, closely related to minimal multi-cut prescriptions in
tensor networks
\cite{GaddeKrishnaSharmaMultiEntropy,GaddeKrishnaSharmaClassification,PeningtonWalterWitteveenFunReplicas,Hu:2026bhg}.
A key motivation for extending the HEC to the HMEC comes from recent results showing that multipartite entanglement measures can distinguish entanglement patterns characterized by Sperner hypergraphs, which are invisible to bipartition entropy data alone~\cite{GeneralizedBalance2602}.
Moreover, junction-law analyses show that multipartite measures are sensitive to
higher-codimension structures localized at junctions, providing diagnostics of
entanglement near contact regions in both gapped and gapless systems
~\cite{IizukaMiyataJunctionLaw,Iizuka:2026ahd}.

This motivation is reinforced from another direction by recent works on multipartite
entanglement signals \cite{BalasubramanianKangMurdiaRossSignals} and on the genuine multi-entropy \cite{IizukaGenuineMultiEntropy,Iizuka:2025caq}. In both lines of work, linear combinations of multi-entropies have emerged as increasingly important objects in holographic and quantum-information settings, and their nonvanishing is expected to encode meaningful information about the underlying entanglement structure.
  
The work~\cite{BalasubramanianKangMurdiaRossSignals} initiated a systematic
holographic study of multipartite
entanglement signals, whose nonvanishing serve as sufficient
witnesses for specific multipartite entanglement or correlation structures,
although they are not in general necessary conditions for their presence. \cite{BalasubramanianKangMurdiaRossSignals} analyzes known three- and four-party signals and proposes higher-party generalizations in AdS$_3$/CFT$_2$ and multiboundary wormhole states.
Stabilizer and graph-state
constructions provide controlled non-holographic laboratories in which
multi-invariants and multi-entropies can be computed explicitly
\cite{AkellaGaddePandeyStabilizer,Czech:2026cxw}. Other recent proposals have
constructed tripartite signals from entanglement wedge cross sections and
multipartite entanglement of purification
\cite{JuPanSunZhaoTriangleInformation,BaoFuruyaNaskarTripartiteSignal},
and related studies have analyzed extremal values of holographic entanglement
signals
\cite{JuPanSunZhaoHolographicMultipartiteUpperBound,JuPanSunWangZhaoMoreUpperBound,
JuSunZhaoUpperBoundCombinations,JuLiuSunXuZhaoIRModifiedGeometries}.  Meanwhile, the genuine multi-entropy was introduced in \cite{IizukaGenuineMultiEntropy} as a special kind of signal designed to extract genuine multipartite entanglement \cite{Iizuka:2025caq,Iizuka:2025elr}. A more structural mathematical formulation of genuine multipartite
signals has been developed using local-unitary invariants and M\"obius
inversion on the partition lattice \cite{GaddeGenuineSignals}. In holography, all these signals above are already subject to nontrivial structural
constraints: purely GHZ-like tripartite entanglement is forbidden in
time-symmetric holographic states, and further work has explored their time
evolution and mixed-state extensions
\cite{BalasubramanianKangCummingsMurdiaRossGHZ,
BalasubramanianJiangRossSignalDynamics,NaskarMixedStateSignalInequality}.
Together with recent work on four-party holographic constraints
~\cite{BalasubramanianEtAlFourParty}, and on multi-entropy-based diagnostics of
topological transitions and anisotropy in holographic semimetals
~\cite{Chen:2026hcf,Chen:2026xtx}, these results show that multipartite
entanglement signals are becoming a useful language for probing holographic
and quantum-information structures.

As such signals continue to proliferate, it becomes increasingly important to
understand their structural properties and organize them systematically.
Sign-definiteness plays a crucial role: the nonvanishing of a signal can
witness the presence of a particular multipartite structure, but if one wants
the vanishing of that signal to imply the absence of such a structure, or at
least to carry a meaningful structural interpretation, then the signal must
have a definite sign on the class of states under consideration. 
Holographic multi-entropy inequalities provide precisely this constraint
language: they characterize the sign-definite linear combinations of multi-entropies allowed by classical bulk geometry.

In this paper, we show that after adjoining partition-labeled multi-entropy
coordinates, the holographic multi-entropy cone \(\HMEC_n\) thus obtained again forms a rational polyhedral convex cone, with its projection onto bipartite entropy coordinates being
the standard HEC. We compute the complete cones \(\HMEC_3\) and
\(\HMEC_4\) in the total-label convention, where the purifier is included among
the \(n\) labels, and obtain seven fundamental holographic multi-entropy
inequalities. Using the complete \(n=3,4\) results together with partial \(n=5\)
evidence, we then propose two structural conjectures: standard HEC facets arise
from convex combinations of HMEC facets, and HMEC facet inequalities
obey a balanced-but-not-too-balanced principle.

The paper is organized as follows. Section~\ref{sec:framework} introduces the
multi-entropy vectors, the graph model description, polyhedrality, and the
multicontraction map proof method. Section~\ref{sec:low-dimensional} gives the
complete \(n=3,4\) facets, their projections to
subadditivity and MMI, the graph-realization proof of completeness for \(\HMEC_3\) and
\(\HMEC_4\), and the complexity analyses motivating our partial evaluation of
\(n=5\). Section~\ref{sec:conjectures} formulates the two guiding conjectures
and gives a finite certificate for the \(\HMEC_5\) high-balance obstruction.
Section~\ref{sec:conclusion} concludes.

\section{Multi-entropy, graph models, and multicontraction map certificates}
\label{sec:framework}

By generalizing the methods and tools of \cite{BaoEtAlHEC} to the multipartite setting, this section establishes the framework of the holographic multi-entropy cone. \ref{2.1}
defines the multi-entropy vectors obtained by adjoining multi-entropy coordinates. \ref{2.2} and \ref{2.3} build the equivalent graph model descriptions of bulk geometries. \ref{subsec:universal-graph-polyhedrality} reveals the rational polyhedral convex cone structure of HMEC, while \ref{2.5} verifies the
multicontraction map method used to prove holographic multi-entropy
inequalities. These ingredients are parallel to their counterparts in HEC, and lay the foundation of the low-$n$ HMEC facet computation
in Section~\ref{sec:low-dimensional}.

\subsection{Multi-entropy vectors and HMEC}\label{2.1}

Multi-entropy \(S^{(q)}\) is the multipartite analogue of entanglement entropy. In quantum
information, it is obtained from R\'enyi multi-entropies defined by
higher-dimensional toric-lattice contractions of density-matrix tensors,
followed by analytic continuation of the replica index to one. For \(q=2\), it
reduces to the standard von Neumann entropy
\cite{GaddeKrishnaSharmaMultiEntropy}. 

In holographic theories, the proposed
dual of \(S^{(q)}\) is the area of a minimal brane web, or soap-film, anchored
on the corresponding boundary partition \cite{GaddeKrishnaSharmaMultiEntropy}. For \(q\ge3\), this multiway soap-film divides the bulk into chambers homologous to the boundary subsystems, while for
\(q=2\) it reduces to the usual RT surface in the bulk separating a boundary
region from its complement. 
In the following discussion, we will refer to this soap-film for any $q$ as the minimal multi-entropy surface.

We now define the holographic multi-entropy vector. We use a total-label
convention: The complete boundary system is divided into \(n\) labeled regions
\(A_1,\dots,A_n\), which includes the ``purifier" region, unlike in the standard HEC notations. Consider each
nontrivial partition\footnote{Here $\piast([n])$ denotes the nontrivial partitions of $[n]$, where the trivial partition that consists of only one block containing all $n$ elements is excluded. {Note that here for simplicity, $[n]$ is used as an abbreviation for boundary subregions $A_1,\dots ,A_n$, and $\pi$ in fact partitions these $n$ subregions.}}
\(\pi=B_1:\cdots:B_{|\pi|}\in\piast([n])\), where \([n]=\{1,\ldots,n\}\), \(|\pi|\) is the number of blocks of \(\pi\), and each
block \(B_i\) contains one or more boundary regions. We define the corresponding
multi-entropy coordinate\footnote{When \(q=2\), this gives the standard
entanglement entropy. In particular, for a bipartition \(A:A^c\), with the
complement taken inside the $n$ parties, we write \(S_A=S^{(2)}(A:A^c)\).}
\(S^{({|\pi|})}(\pi):=S^{({|\pi|})}(B_1:\cdots:B_{|\pi|})\). By collecting the values of such
multi-entropies for each partition in \(\piast([n])\), we obtain the
holographic multi-entropy vector 
\begin{equation}
    \vec S:=\left(S^{(|\pi|)}(\pi)\right)_{\pi\in\Pi^*([n])},
\end{equation}
with length \(|\Pi^*([n])|=B_n-1\), where \(B_n\)
is the \(n\)-th Bell number.

Equivalently, for fixed $n$, $\vec S$ is obtained by starting from the standard entropy vector, which consists of the entropy coordinates associated with all bipartitions of the full boundary system into a subsystem and its complement, and adjoining the multi-entropy coordinates labeled by all partitions with three or more blocks.
 For example, for \(n=3\) the multi-entropy vector is
\((S_A,S_B,S_C,S^{(3)}(A:B:C))\). For \(n=4\), it is
\begin{align}
\big(&S_A,S_B,S_C,S_D;\; S_{AB},S_{AC},S_{AD}; \notag\\
&S^{(3)}(AB:C:D),S^{(3)}(AC:B:D),S^{(3)}(AD:B:C), \notag\\
&S^{(3)}(A:BC:D),S^{(3)}(A:BD:C),S^{(3)}(A:B:CD); \notag\\
&S^{(4)}(A:B:C:D)\big).
\end{align}
Meanwhile, for \(n=2\) it reduces to the entropy vector \((S_A,S_B)\).

For fixed \(n\), varying over boundary decompositions and admissible bulk
geometries gives a convex cone, which we denote by \(C^{\mathrm{HMEC}}_n\). The convex cone structure follows by the same argument as for HEC: rescaling the manifold metric rescales the multi-entropy vector, and
taking the disjoint union of manifolds adds multi-entropy vectors coordinate-wise. Thus,
\(C^{\mathrm{HMEC}}_n\) is closed under positive scalar multiplication and
addition. In Section~\ref{subsec:universal-graph-polyhedrality} we show that
\(C^{\mathrm{HMEC}}_n\) is a rational polyhedral cone, so it is bounded by
finitely many holographic multi-entropy inequalities as its facets.

\subsection{Graph model description}\label{2.2}

The continuum soap-film picture discussed above has a natural discrete
counterpart in weighted graph models. Given any bulk geometry, we now show the construction of its corresponding graph. 

A graph model \(G\) consists of boundary
vertices labeled by \([n]\), internal vertices, and nonnegative edge weights.
The guiding idea is the same as in HEC:
minimal separating surfaces in the bulk are replaced by minimal cuts in a
weighted graph. For multi-entropy, however, the relevant cuts are generalized from bipartite ones to multiway cuts associated with the multi-entropies. 

For a bulk geometry whose boundary is decomposed into regions
\(A_1,\dots,A_n\), draw in the bulk the minimal multi-entropy surface that gives $S^{({|\pi|})}(\pi)$ for each
nontrivial partition \(\pi\in \piast([n])\). These
multi-entropy surfaces collectively partition the bulk into finitely many
chambers. Similar to \cite{BaoEtAlHEC}, a vertex is assigned to each connected
chamber. Chambers adjacent to the asymptotic boundary define boundary vertices,
which are colored\footnote{Identical to the HEC case, coloring a vertex subset with another set means creating a function mapping the former to the latter.} by the corresponding boundary region.

An edge is assigned to each pair of chambers that share a nonzero-area
interface. Its weight is the area of the corresponding interface divided by
\(4G_N\). Thus, we obtain the weighted graph model \(G=(V,E)\) for the given
geometry. See panel (a) of Figure~\ref{fig:graphmodel-dictionary} for an example.

Moreover, parallel to the minimal multi-entropy surfaces, we introduce the notion of multiway cuts as generalizations of the cuts in \cite{BaoEtAlHEC}. Each multiway cut $W=(W_1:\dots: W_q)$ is a nontrivial partition of the graph vertices $V$ defined above. Also, for every such multiway cut, let $C(W)$ be the collection of edges whose two endpoints reside in different blocks of $W$, and $|C(W)|$ counts the total weights of edges in $C(W)$.

We can now define the discrete multi-entropy. For any \(\pi\in\piast([n])\), set
\begin{equation}
    S^{(|\pi|)\ast}(\pi):=\min_{\substack{W\in\piast(V)\\ \partial W \sim \pi}}|C(W)|.
\end{equation}
This is the minimum multiway-cut cost separating the boundary blocks of
\(\pi\). Here \(\partial W\sim\pi\) means that the minimization is over
multiway cuts whose induced partition on the boundary vertices agrees with the
boundary partition \(\pi\).

\begin{figure}[h]
\centering
\begin{minipage}[c]{0.48\textwidth}
\centering
% [inline block 0: 2 envs, 86089 chars -> data_tex | \begin{tikzpicture}[scale=1.75, line cap=round, line join=round] ...]


\vspace{0.4em}
\textbf{(b)} graph $\rightarrow$ geometry
\end{minipage}
\caption{Geometry--graph correspondence for multi-entropy. Panel (a)
illustrates the geometry-to-graph construction: the boundary is divided into the parties
\(A,B,C\), the outer black curves are RT surfaces, and the central trivalent
network is the minimal multi-entropy surface for \(S^{(3)}(A:B:C)\). Denoted in blue is the graph corresponding to this geometry setting. Panel (b) illustrates the
graph-to-geometry construction: graph edges are replaced by wormhole necks and
graph vertices by high-cost chambers. The solid yellow curves indicate wormhole throat
cross sections. By choosing the chamber costs to be sufficiently large, all
minimizing multi-entropy surfaces localize on the throats, thereby reproducing
the graph multiway-cut values.}
\label{fig:graphmodel-dictionary}
\end{figure}

There is, however, an important difference between the graph model for entropy vectors and its multi-entropy counterpart. Since entropy coordinates correspond to bipartite cuts, discrete entropies can be computed in polynomial time by max-flow/min-cut algorithms
\cite{FordFulkerson1962}. By contrast, discrete multi-entropies are given by
minimum multiway cuts. Computing these cuts is already NP-hard in general
weighted graphs with three boundary vertices \cite{Dahlhaus:1994mtc}. Therefore one should not
expect a universal polynomial-time local simplification rule that preserves all
multi-entropy coordinates and makes every multiway cut
readable from a simple reduced graph; such a rule would give a polynomial-time
algorithm for the three-terminal multiway-cut problem. This is
the reason why graph realization for HMEC is algorithmically harder than for
standard HEC.

\subsection{Graph to geometry}\label{2.3}
We now describe the opposite graph-to-geometry direction. Given a finite weighted graph
model, we construct a multi-mouth wormhole geometry whose multi-entropy values reproduce
the corresponding minimum multiway-cut values of the discrete multi-entropies. Together with the
geometry-to-graph construction in the previous subsection, this gives the equivalent graph model formulation used
below to establish polyhedrality and to certify HMEIs.

To be precise, let \(G=(V,E,w)\) be a finite weighted graph with boundary labels \([n]\), with $w$ denoting the edge weights.
For a nontrivial partition \(\pi=B_1:\cdots:B_{|\pi|}\in \Pi^*([n])\), let $S_G^{(|\pi|)\ast}(\pi)$ be its associated discrete multi-entropy, and we claim that \(G\) can be realized by a multi-mouth wormhole geometry \(X_G\) with negative curvature such that
\begin{equation}
    S_{X_G}^{(|\pi|)}(\pi)
=
S_G^{(|\pi|)\ast}(\pi)
\qquad
\forall \pi\in \Pi^*([n]),
\end{equation}
where \(S_{X_G}^{(|\pi|)}(\pi)\) denotes the multi-entropy in \(X_G\),
computed as the minimum soap-film area divided by \(4G_N\).  See panel (b) of Figure~\ref{fig:graphmodel-dictionary} for an example.

The construction replaces every edge \(e=(u,v)\) by a wormhole neck \(N_e\)
connecting the geometric pieces associated with vertices \(u\) and \(v\). The
neck contains a distinguished throat
\(\Gamma_e\), whose area is fixed by the edge weight:
\begin{equation}
\frac{\operatorname{Area}(\Gamma_e)}{4G_N}
=
w_e .
\end{equation}
In a two-dimensional spatial slice this area is simply the length of a closed
geodesic. Meanwhile, every graph vertex \(v\) of degree \(d(v)\) (which is the number of edges incident to $v$) is replaced
by a negatively curved \(d(v)\)-holed pants chamber \(P_v\): for \(d(v)=3\)
this is a standard hyperbolic pair of pants. The neck-facing boundaries of \(P_v\) are glued to the wormhole necks for the edges attached to \(v\). For boundary vertices, their corresponding pants chambers also carry the corresponding external asymptotic boundary component.

The metric is chosen to satisfy two elementary conditions. First,
each neck \(N_e\) satisfies a bottleneck condition: any admissible interface
separating the two ends of \(N_e\) has area at least
\(\operatorname{Area}(\Gamma_e)\), with equality achieved by the throat
\(\Gamma_e\). Second, the vertex chambers satisfy a high-cost condition: every multiway web lying inside a chamber \(P_v\) that nontrivially partitions the chamber holes has area
larger than the total throat area of all the wormhole necks that it connects. This condition forces minimizing webs to localize
on neck throats rather than residing in the partition vertex chambers.

With the bottleneck and high-cost chamber conditions imposed, for every boundary partition \(\pi\), the minimizing web that gives the corresponding multi-entropy can be
pushed to a union of wormhole neck throats without increasing its area, hence the continuum minimization reduces to the graph multiway-cut minimization. In this way, the constructed
manifold realizes all discrete multi-entropies of the original graph. Thus, for any finite weighted graph, we can always construct a geometry where each multi-entropy coordinate equals the discrete multi-entropy in the graph. Combined with the result in the previous subsection that an arbitrary bulk geometry admits a graph model description, we can conclude the equivalence between bulk geometries and graph models. Under this equivalence, crucial properties of general HMEC are sufficiently proved using the graph model formalizationin the following two subsections.

\subsection{Universal graph language and polyhedrality}
\label{subsec:universal-graph-polyhedrality}

In this section, we prove that \(C_n^{\rm HMEC}\) is a rational polyhedral cone
for any fixed \(n\). Equivalently, it is cut out by finitely many
facet-defining HMEIs. As in \cite{BaoEtAlHEC}, this can be shown by converting
the arbitrary graph models that characterize holographic multi-entropy vectors to a universal graph description.

\begin{lemma}\label{universalgraph}
Any holographic multi-entropy vector in $C_n^{\rm HMEC}$ admits a universal complete graph model description with fixed boundary coloring.
\end{lemma}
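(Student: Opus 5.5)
Following the strategy of \cite{BaoEtAlHEC}, I will take an arbitrary holographic multi-entropy vector $\vec S\in C_n^{\rm HMEC}$, represent it by a finite weighted graph $G=(V,E,w)$ using the geometry-to-graph construction of Section~\ref{2.2}, and then compress $G$ onto a graph whose vertex set depends only on $n$. Which vertices carry which boundary colors will also be fixed. The natural candidate for the universal vertex set is the set of \emph{partition-assignment labels}. To each vertex $v\in V$ I associate the vector $\sigma(v)=(W^{\pi}(v))_{\pi\in\piast([n])}$. Here, for each $\pi$, I first fix one minimizing multiway cut $W^{\pi}$ realizing $S^{(|\pi|)\ast}(\pi)$. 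I then record which block of $\pi$ the chamber containing $v$ belongs to, where the blocks of $W^\pi$ are labeled by the blocks $B_i$ of $\pi$ via $\partial W^\pi\sim\pi$. The possible labels form the finite set $L_n=\prod_{\pi\in\piast([n])}\{1,\dots,|\pi|\}$, whose size depends only on $n$. Boundary vertices colored by $A_i$ have the fixed label $\ell_i$ whose $\pi$-entry is the block of $\pi$ containing $i$. This gives the fixed boundary coloring.

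\textbf{Step 1 (coarse-graining).} I identify all vertices of $G$ that share the same label, and sum the weights of the parallel edges produced by the identification. Internal edges of a label class become self-loops and are discarded. The result is a weighted graph $G'$ on at most $|L_n|$ vertices. I then complete it to the complete graph $K_{L_n}$ by adding zero-weight edges wherever an edge is missing, including at any label that does not occur. The boundary vertex $\ell_i$ is colored $A_i$. If two distinct parties $i\neq j$ had the same label, they would lie in the same block of every $\pi$. That is impossible, because the partition into singletons is in $\piast([n])$. So the boundary coloring is injective and independent of the geometry.

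\textbf{Step 2 (upper bound).} For each $\pi$, the cut $W^\pi$ is constant on label classes by construction. It therefore descends to a multiway cut of $K_{L_n}$ with the same boundary partition $\pi$. Its cost is exactly $|C(W^\pi)|$, since the only edges removed are internal to label classes and none of those were cut. Hence $S^{\ast}_{K}(\pi)\le S^{\ast}_G(\pi)$.

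\textbf{Step 3 (lower bound).} Any multiway cut $U$ of the compressed graph pulls back along the quotient map to a multiway cut of $G$ with the same boundary partition and the same cost. This holds because the edges crossing blocks in $G$ correspond exactly, weight by weight, to the cut edges in the quotient. Hence $S^{\ast}_G(\pi)\le S^\ast_K(\pi)$, so the two graphs have equal multi-entropy vectors. The realization in Section~\ref{2.3} then shows that the complete graph on $L_n$ with fixed coloring describes $\vec S$.

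The main obstacle is Step 2: all minimizers must be chosen simultaneously and consistently, so that every $W^\pi$ is a union of label classes. This holds by construction here because the labels are defined from the chosen minimizers themselves, but one must check that a block of $W^\pi$ which is disconnected, or which contains no boundary vertex, is still handled correctly. I would absorb such boundary-free pieces into a neighboring block. This does not increase the cost, so without loss of generality every block of each minimizer is labeled by some $B_i$. Step 3 is just the statement that contraction can only shrink the feasible set, so it poses no difficulty. Finiteness of $L_n$ then feeds directly into the polyhedrality argument, since $\vec S$ becomes a nonnegative combination of multiway-cut functions of one fixed graph.
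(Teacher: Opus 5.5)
Your proposal is correct and follows essentially the same route as the paper: you label each vertex by its block assignment in one fixed minimizer per partition, which is exactly the paper's label-string vertex set $V_U$, with cells $W(x)=\bigcap_\pi W_\pi^{x_\pi}$ and summed inter-cell edge weights. Your Steps 2 and 3 (descending the chosen minimizers to the quotient, and pulling back arbitrary cuts of the quotient) spell out the equality of discrete multi-entropies that the paper calls straightforward. Your merging of boundary-free blocks also makes explicit an assumption the paper uses implicitly when it names each block $W_\pi^{\lambda^\pi_p}$ by its boundary colors.
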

\begin{proof}
To define the vertex set \(V_U\) of the universal graph \(G_U\), we generalize
the bitstring description in \cite{BaoEtAlHEC} to label-strings in our
multi-entropy setting. For any
\(\pi=B_1:\cdots:B_{|\pi|}\in\piast([n])\), we introduce a label set
\(\Lambda_\pi:=\{\lambda^\pi_1,\dots,\lambda^\pi_{|\pi|}\}\) that assigns one
label to each block of $\pi$. By choosing a label from $\Lambda_\pi$ for all $\pi\in\piast\big([n]\big)$ and combining them together, a label-string of length $|\piast\big([n]\big)|$ is formed. Each distinct label-string of
this form represents a universal vertex, hence
\(|V_U|=\prod_{\pi\in\piast([n])}|\pi|\). {These vertices encode all possible
simultaneous block-membership patterns for the partitions in \(\Pi^*([n])\),
regardless of whether a given pattern is realized in any specific geometry.}
In particular, any boundary vertex $x_i$ with $i\in[n]$ is defined such that $(x_i)_\pi=\lambda^\pi_p$ iff $i\in B_p$, where $p\in\big[|\pi|\big]$.

As for the universal edge set \(E_U\), for each \(x\in V_U\) we introduce \(W(x):=\cap_{\pi\in\piast\big([n]\big)}W_\pi^{x_\pi}\), where \(W_\pi\) is chosen as the multiway cut in the original graph that yields the discrete multi-entropy \(S^{(|\pi|)\ast}\). For any \(p\in\big[|\pi|\big]\), \(W_\pi^{\lambda^\pi_p}\)  is defined as the block of \(W_\pi\) in which boundary vertices are colored by \(B_p\). {Thus, given an arbitrary graph model \(G\), \(W(x)\) is the cell consisting of the vertices whose residing block in each multiway cut is exactly specified by \(x\).} Analogously to the HEC case, some of these cells can be disconnected or empty, and \(W(x_i)\) contains precisely the boundary vertices in \(V(G)\) colored by \(i\).
The universal edge weights are then defined as in \cite{BaoEtAlHEC}:
\(w(x,y):=\sum_{e\in E(x,y)}w_e\), where \(E(x,y)\subset E(G)\) contains
edges in the original graph with one endpoint in \(W(x)\) and the other in
\(W(y)\).

It is then straightforward that for any $\pi\in \piast([n])$, the discrete multi-entropy in the universal complete graph equals that of the original graph. 
\end{proof}

Based on Lemma \ref{universalgraph} and the graph-geometry dictionary from previous subsections, all holographic multi-entropy vectors in \(C_n^{\rm HMEC}\) can be represented by the universal complete graph \(G_U=(V_U,E_U)\) with fixed boundary coloring, while the only difference lies in the non-negative weights of the edges. Therefore, the proof of polyhedrality is formally identical to the argument for the holographic entropy cone, only with cuts replaced by multiway cuts: For each \(\pi\in\Pi^\ast([n])\), there are finitely many feasible multiway cut assignments, each with cost linear in the edge weights \(w\). As \(\pi\) ranges over \(\Pi^\ast([n])\), equal-cost hyperplanes between these assignments divide the weight orthant into a finite number of rational polyhedral subcones. On each of them the minimizing assignment for every \(\pi\) is fixed, and the full multi-entropy vector is linear in \(w\). The image is therefore a finite union of rational polyhedral cones and is convex by construction; equivalently, it is the conical hull of finitely many rational polyhedral images. Therefore \(C_n^{\rm HMEC}\) is rational polyhedral.

% \begin{itemize}[leftmargin=*]
%   \item For fixed \(n\), multiway-cut behavior is controlled by finitely many partition-assignment profiles.  This converts arbitrary graph models into a finite universal graph language.

%   \item The planned polyhedrality statement is
%   \[
%     C_n^{\rm HMEC}\text{ is a rational polyhedral cone for fixed }n.
%   \]
%   The proof should be sketched through the universal graph construction rather than through case-by-case geometries.

%   \item Polyhedrality is essential because it implies finitely many HMEC facets at fixed \(n\).  It turns the comparison between HEC and HMEC into a finite convex-geometric problem.

%   \item The exposition should avoid making polyhedrality look tautological.  The cone is defined using arbitrary weighted graphs, and the finite universal certificate is the nontrivial step.
% \end{itemize}

\subsection{The multicontraction map proof method}\label{2.5}

The contraction map proof method, first proposed in \cite{BaoEtAlHEC}, is a crucial tool for proving HEIs. In this section, we generalize it to the HMEC case, thereby enabling a systematic and efficient search for valid HMEIs.

After moving all terms with negative coefficients to the right-hand side, an HMEI can be written as
\begin{equation}
\sum_{l\in L} a_l S^{(|\pi_l|)}(\pi_l)
\ge
\sum_{r\in R} b_r S^{(|\rho_r|)}(\rho_r),
\label{eq:hmec-inequality}
\end{equation}
where for any $l\in L, r\in R$, $a_l, b_r$ are positive, and
$\pi_l,\rho_r\in\piast([n])$. Analogously to \cite{BaoEtAlHEC}, the central
idea of the multicontraction map proof is to construct feasible candidates for
the right-hand-side multi-entropy surfaces by recombining {segments of} the minimal multi-entropy surfaces for the left-hand-side terms. We
therefore introduce separate label-string spaces for the left- and right-hand
sides of the inequality, rather than using the universal label-strings that
encode all multi-entropy coordinates at once. For each
$\pi=B_1:\cdots:B_{|\pi|}\in\piast([n])$ with label set
$\Lambda_\pi:=\{\lambda^\pi_1,\dots,\lambda^\pi_{|\pi|}\}$, define
\begin{equation}
X_L:=\prod_{l\in L}\Lambda_{\pi_l},
\qquad
X_R:=\prod_{r\in R}\Lambda_{\rho_r},
\end{equation} 
as extensions of the bitstring spaces $\{0,1\}^L$ and $\{0,1\}^R$ in \cite{BaoEtAlHEC}.
Thus, an element $x\in X_L$ specifies one block of $\pi_l$ for each left-hand-side term, and similarly for $X_R$.

With these spaces fixed, for each boundary color $i\in[n]$, the occurrence
vectors $x_i\in X_L$, $y_i\in X_R$ are now defined by
\begin{equation}
(x_i)_l:=\lambda^{\pi_l}_m \ \text{iff}\ i\in B^{\pi_l}_m,\qquad (y_i)_r:=\lambda^{\rho_r}_p \ \text{iff}\ i\in B^{\rho_r}_p
\end{equation}
for each $m\in\big[|\pi_l|\big]$ and $p\in\big[|\rho_r|\big]$, and we write $B^{\pi_l}_m,B^{\rho_r}_p$ as the respective blocks of $\pi_l,\rho_r$ for clarity.
Furthermore, we need to equip $X_L,X_R$ with the correct modified weighted Hamming distances
\begin{equation}
d_L(x,x')=
\sum_{l\in L}a_l \mathbf 1_{x_l\ne x'_l},
\qquad
d_R(y,y')=
\sum_{r\in R}b_r\mathbf 1_{y_r\ne y'_r},
\end{equation}
where $\mathbf 1_{x_l\ne x'_l}$ equals one if $x_l\ne x'_l$ and zero
otherwise. The indicator $\mathbf 1_{y_r\ne y'_r}$ is defined in the same way.

Based on the above concepts, we can now define the multicontraction map as
follows:
\begin{definition}[\normalfont multicontraction map]\footnote{When every partition has two blocks, this definition reduces to the contraction map proof method for HEIs in \cite{BaoEtAlHEC}.}
For any multi-entropy inequality in the form \eqref{eq:hmec-inequality}, a
multicontraction map is a function $f:X_L\rightarrow X_R$ that satisfies
$f(x_i)=y_i$ for all $i\in[n]$, and
\begin{equation}
d_R\big(f(x),f(x')\big)\le d_L(x,x'),
\qquad \forall x,x'\in X_L.
\label{eq:multi-contraction-condition}
\end{equation}
\end{definition}

\begin{theorem}\label{thm:multi-contraction}
If there exists a multicontraction map for  \eqref{eq:hmec-inequality}, then the inequality holds for every graph model. Hence it defines a valid HMEI.
\end{theorem}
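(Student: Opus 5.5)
The plan follows the contraction-map argument of \cite{BaoEtAlHEC}, with cuts replaced by multiway cuts. Fix an arbitrary graph model $G=(V,E,w)$ with boundary vertices colored by $[n]$. For each left-hand term $l\in L$, choose a minimal multiway cut $W_{\pi_l}=(W^{\lambda^{\pi_l}_1}_{\pi_l}:\cdots:W^{\lambda^{\pi_l}_{|\pi_l|}}_{\pi_l})$ realizing $S^{(|\pi_l|)\ast}(\pi_l)$. Here the block $W^{\lambda^{\pi_l}_m}_{\pi_l}$ is the one containing the boundary vertices colored by $B^{\pi_l}_m$. If the minimizer's blocks do not all contain boundary vertices, any boundary-free blocks can be merged into an arbitrary block without increasing the cost, so each vertex lies in exactly one labeled block. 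This defines an assignment $x:V\to X_L$, with $x(v)_l$ the label of the block of $W_{\pi_l}$ containing $v$. By the definition of $\partial W\sim\pi_l$, every boundary vertex of color $i$ satisfies $x(v)=x_i$.

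Next I would rewrite the left-hand side edge by edge. An edge $e=(u,v)$ is cut by $W_{\pi_l}$ exactly when $x(u)_l\neq x(v)_l$. Hence
\begin{equation}
\sum_{l\in L}a_l S^{(|\pi_l|)\ast}(\pi_l)=\sum_{e=(u,v)\in E}w_e\, d_L\big(x(u),x(v)\big).
\end{equation}
Now compose with the multicontraction map to get $y:=f\circ x:V\to X_R$. For each $r\in R$, the labels $y(v)_r\in\Lambda_{\rho_r}$ define a partition $W'_{\rho_r}$ of $V$: its blocks are the preimages of the labels, and empty preimages are discarded. Since $f(x_i)=y_i$, every boundary vertex of color $i$ lands in the block labeled $(y_i)_r$. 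Blocks of $\rho_r$ that are distinct receive distinct labels, and each label is actually attained by some boundary vertex. So $W'_{\rho_r}$ is a nontrivial partition with $\partial W'_{\rho_r}\sim\rho_r$, which makes it a feasible competitor in the minimization defining $S^{(|\rho_r|)\ast}(\rho_r)$.

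The inequality then follows from a chain of estimates:
\begin{equation}
\sum_{r\in R}b_r S^{(|\rho_r|)\ast}(\rho_r)\le\sum_{r}b_r|C(W'_{\rho_r})|=\sum_{e=(u,v)}w_e\,d_R\big(f(x(u)),f(x(v))\big)\le\sum_{e}w_e\,d_L\big(x(u),x(v)\big),
\end{equation}
where the last step uses the contraction condition \eqref{eq:multi-contraction-condition} together with $w_e\ge0$. The right end of this chain equals the left-hand side of \eqref{eq:hmec-inequality}. The result extends to all holographic geometries through the geometry-to-graph construction of subsection~\ref{2.2}, which gives a graph with the same multi-entropies; alternatively one can use Lemma~\ref{universalgraph}.

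The main point requiring care is feasibility of the constructed cuts. I need that each $W'_{\rho_r}$ induces exactly the boundary partition $\rho_r$, with no spurious merging or splitting of boundary blocks. I also need that the multiway-cut cost is exactly the weighted count of edges whose endpoints carry different labels. The first requirement is guaranteed by the boundary condition on $f$. The second holds because $|C(W)|$ counts each edge once regardless of how many blocks the partition has, which matches the indicator structure of $d_R$.
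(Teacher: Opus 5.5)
Your proposal is correct and takes essentially the same route as the paper's proof. You encode each vertex by its block labels in chosen minimal left-hand multiway cuts, push these label-strings through $f$ to obtain feasible right-hand multiway cuts, and compare costs edge by edge using the contraction condition. Your extra care makes explicit several points the paper leaves implicit: merging boundary-free blocks of a minimizer, checking that every label of $\rho_r$ is attained so that $W'_{\rho_r}$ is a genuine nontrivial partition with $\partial W'_{\rho_r}\sim\rho_r$, and using $w_e\ge 0$.
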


\begin{proof}
The proof of Theorem \ref{thm:multi-contraction} is the direct multipartite analogue of the contraction map proof for HEC: cuts are replaced by multiway cuts, bitstrings by label-strings, and the weighted Hamming distance by its counterpart on label-string spaces. 

Consider an arbitrary graph model \(G\). For each left-hand-side partition
\(\pi_l=B^{\pi_l}_1:\cdots:B^{\pi_l}_{|\pi_l|}\), choose a minimum multiway
cut \(W_{\pi_l}\) realizing the discrete multi-entropy
\(S_G^{(|\pi_l|)\ast}(\pi_l)\). For any $p$ from $1$ to $|\pi_l|$, write
\(W_{\pi_l}^{\lambda^{\pi_l}_p}\) for the block of this cut associated with
\(B^{\pi_l}_p\). Then every graph vertex \(v\in V(G)\) determines a label
string \(x_L(v)\in X_L\) by recording its block in each chosen left-hand-side
cut:
\begin{equation}
x_L(v)_l=\lambda^{\pi_l}_p
\longleftrightarrow
v\in W_{\pi_l}^{\lambda^{\pi_l}_p}.
\end{equation}
The collection of such label-strings records the simultaneous block-membership
patterns that are realized in
\(G\), with respect to all chosen left-hand-side cuts. Distinct vertices may of course have the same label-string\footnote{Here we simplify the discussion by working directly with the encoding label-string $x_L(v)$.
Equivalently, for any \(x\in X_L\) define
\begin{equation}
W(x)=\bigcap_{l\in L} W_{\pi_l}^{x_l}.
\end{equation}
They are the common-refinement cells of the
left-hand-side multiway cuts, among which the nonempty ones give the graph model analogues of the
bulk pieces cut out by the left-hand-side minimal multi-entropy surfaces. For
fixed \(r\in R\), the right-hand-side cut constructed from the
multicontraction map has blocks
\[
\bigcup_{x:\, f(x)_r=\lambda^{\rho_r}_p} W(x).
\]
Thus the proof recombines the cells determined by the left-hand-side cuts into
feasible right-hand-side multiway cuts, directly paralleling the contraction
map argument of \cite{BaoEtAlHEC}.
}.

By the multicontraction map $f$, the above label-strings $x_L(v)$ for all
$v\in V(G)$ map to \(X_R\). These images define a feasible cut for every
right-hand-side partition. Specifically, for each $r\in R$, define a partition
$W'_r\in\piast\big(V(G)\big)$ by grouping all vertices with the same $r$-th
component $f\big(x_L(v)\big)_r$ into one block. The boundary condition
$f(x_i)=y_i$ ensures that, for fixed $r$ and $p$, one has
$f(x_i)_r=\lambda_p^{\rho_r}$ if and only if
$i\in B_p^{\rho_r}$. Hence the induced partition $W'_r$ separates the boundary
vertices according to
\(\rho_r=B^{\rho_r}_1:\cdots:B^{\rho_r}_{|\rho_r|}\), and is therefore a
feasible multiway cut for $\rho_r$.
Thus, by definition of the discrete multi-entropy,
\begin{equation}
S_G^{(|\rho_r|)\ast}(\rho_r)
\le|C(W_r')|=
\sum_{(u,v)\in E(G)}
w_{uv}\ \mathbf 1_{f\big(x_L(u)\big)_r\ne f\big(x_L(v)\big)_r},
\end{equation}
where $w_{uv}$ is the weight of the edge $(u,v)$.
Summing over $r\in R$ with coefficients $b_r$, we obtain
\begin{equation}
\begin{aligned}
\sum_{r\in R}b_rS_G^{(|\rho_r|)\ast}(\rho_r)
\le\sum_{r\in R}b_r|C(W_r')|=
\sum_{(u,v)\in E(G)}&
w_{uv}\sum_{r\in R}b_r\mathbf 1_{f\big(x_L(u)\big)_r\ne f\big(x_L(v)\big)_r}
\\=
\sum_{(u,v)\in E(G)}w_{uv}\ 
d_R\big(f(x_L(u)), f(x_L(v))\big)
&\le
\sum_{(u,v)\in E(G)}w_{uv}\ 
d_L\big(x_L(u),x_L(v)\big),
\end{aligned}
\end{equation}
where the last inequality follows from the contraction condition
\eqref{eq:multi-contraction-condition}. Meanwhile, on the left-hand side of (\ref{eq:hmec-inequality}), 
\begin{equation}
    \begin{aligned}
    \sum_{(u,v)\in E(G)}w_{uv}\ d_L\big(x_L(u),x_L(v)\big)= \sum_{(u,v)\in E(G)}w_{uv}
    \sum_{l\in L}a_l\ \mathbf 1_{x_L(u)_l\ne x_L(v)_l}
    \\ =\sum_{l\in L}a_l\sum_{(u,v)\in E(G)}w_{uv}
    \ \mathbf 1_{x_L(u)_l\ne x_L(v)_l}=\sum_{l\in L}a_l|C(W_{\pi_l})|=\sum_{l\in L}a_l S_G^{(|\pi_l|)\ast}(\pi_l).
    \end{aligned}
\end{equation}
Thus, we have proved that the HMEI
\begin{equation}
\sum_{l\in L} a_l S^{(|\pi_l|)}(\pi_l)
\ge
\sum_{r\in R} b_r S^{(|\rho_r|)}(\rho_r),
\end{equation}
holds for every graph model. Under the graph-geometry dictionary, this HMEI is generally valid for any holographic state, and the multicontraction map serves as a general tool of proving HMEIs and identifying HMECs for a fixed number of boundary parties.
\end{proof}

We record two basic HMEIs and their multicontraction maps as examples:
\begin{align}
    &2S^{(3)}(A:B:C)
    \ge
    S^{(2)}(A:BC)+S^{(2)}(B:AC)+S^{(2)}(C:AB),
    \label{eq:basic-three-party-hmei}
    \\
    &S^{(4)}(A:B:C:D)+S^{(2)}(AD:BC)
    \ge
    S^{(3)}(AD:B:C)+S^{(3)}(A:BC:D).
    \label{eq:orbit-two-example}
\end{align}
Tables~\ref{tab:basic-three-party-map} and~\ref{tab:orbit-two-map} give multicontraction maps
satisfying the boundary and contraction conditions. For clarity we set the labels of $\Lambda_\pi$ to be the blocks of $\pi$.

\begin{table}[h]
\centering
\begin{tabular}{c|ccc}
\hline
\(2S^{(3)}(A:B:C)\)
& \(S^{(2)}(A:BC)\)
& \(S^{(2)}(B:AC)\)
& \(S^{(2)}(C:AB)\) \\
\hline
\(A\) & \(A\) & \(AC\) & \(AB\) \\
\(B\) & \(BC\) & \(B\) & \(AB\) \\
\(C\) & \(BC\) & \(AC\) & \(C\) \\
\hline
\end{tabular}
\caption{Multicontraction map for
(\ref{eq:basic-three-party-hmei}). }
\label{tab:basic-three-party-map}
\end{table}

\begin{table}[h]
\centering
\begin{tabular}{cc|cc}
\hline
\(S^{(4)}(A:B:C:D)\)
& \(S^{(2)}(AD:BC)\)
& \(S^{(3)}(AD:B:C)\)
& \(S^{(3)}(A:BC:D)\) \\
\hline
\(A\) & \(AD\) & \(AD\) & \(A\)  \\
\(A\) & \(BC\) & \(AD\) & \(BC\) \\
\(B\) & \(AD\) & \(AD\) & \(BC\) \\
\(B\) & \(BC\) & \(B\)  & \(BC\) \\
\(C\) & \(AD\) & \(AD\) & \(BC\) \\
\(C\) & \(BC\) & \(C\)  & \(BC\) \\
\(D\) & \(AD\) & \(AD\) & \(D\)  \\
\(D\) & \(BC\) & \(AD\) & \(BC\) \\
\hline
\end{tabular}
\caption{Multicontraction map for (\ref{eq:orbit-two-example}).}
\label{tab:orbit-two-map}
\end{table}

\FloatBarrier

\section{Complete HMEC facets for $n=3,4$}
\label{sec:low-dimensional}

In the last section, we proved that \(\HMEC_n\) is polyhedral and
introduced multicontraction maps as certificates for holographic multi-entropy
inequalities. We now apply this framework to \(\HMEC_3\) and \(\HMEC_4\).
As a result, the complete 7 facet orbits of \(\HMEC_3\) and \(\HMEC_4\) have been determined: \(2\) in \(\HMEC_3\) and \(5\) in \(\HMEC_4\). Not only do they serve as basic examples of HMEC facet computation and reveal low-dimensional holographic multipartite entanglement structures, but by comparing them with projections from the entropy-sector, such as subadditivity and MMI, the results also shed light on some general properties of HMEC.

\subsection{The \texorpdfstring{\(n=3\)}{n=3} cone and the subadditivity projection}

Fixing the total-label convention, we first compute \(\HMEC_3\), and then compare it with the
standard HEC with two boundary subregions and a purifier subregion. We label the boundary subregions by $A,B$ and $C$.

In \(\HMEC_3\), there are \(2\) orbits. Here we define an orbit as the set consisting of an HMEI and all inequalities obtained from it by permuting the boundary regions. The orbit size is the number of inequalities in the orbit. The first orbit contains
\begin{equation}
P_A:\qquad
-
S^{(3)}(A:B:C)
+
S_B
+
S_C
\ge 0 ,
\end{equation}
together with its permutations, and has orbit size \(3\). The second orbit is
the symmetric inequality
\begin{equation}
K_{ABC}:\qquad
2S^{(3)}(A:B:C)
-
S_A
-
S_B
-
S_C
\ge 0 ,
\end{equation}
and has orbit size \(1\). {In fact, similar inequalities were proposed in \cite{Bao:2018gck} in the context of multipartite entanglement of purification, which was identified as the reflected multi-entropy in \cite{Yuan:2024yfg}. In the special case where the boundary parties considered in \cite{Bao:2018gck} form a pure state, the relevant bulk geometric object reduces to the minimal multi-entropy surface, and the corresponding inequalities become $P_A$ and $K_{ABC}$.
}

To show that the inequalities in both orbits give the complete set of facets of the HMEC requires two steps. First, a multicontraction map corresponding to each orbit have already been constructed, thereby proving that they give valid HMEIs. (For $P_A$ this is direct, while for $K_{ABC}$ see Table \ref{tab:basic-three-party-map}.) Second, we show that these inequalities are tight by computing the extreme rays of the cone bounded by these inequalities and giving a graph realization for each ray.
Writing the coordinate order as
\begin{equation}
(x_1,x_2,x_3,x_4)
=
\bigl(
S_A,\,
S_B,\,
S_C,\,
S^{(3)}(A:B:C)
\bigr),
\end{equation}
there are \(4\) concrete extreme rays in the cone formed by the two orbits:
\begin{equation}R_1:
(1,1,1,2),
\qquad
R_2:
(0,1,1,1),
\qquad
R_3:
(1,0,1,1),
\qquad
R_4:
(1,1,0,1).
\end{equation}
All of them admit simple tree graph realizations: \(R_1\) corresponds to a graph in which boundary vertices \(A,B,C\) are all connected to a single inner point by weight-one edges, while \(R_2\) corresponds to a graph where only one weight-one edge exists, connecting \(B\) and \(C\). The graphs for $R_3$ and $R_4$ are just permutations of that of $R_2$.
Thus, these two orbits constitute all \(4\) facets of
\(\HMEC_3\). 

{By the equivalence between the bulk geometry and the graph-model description, any inequality satisfied by all finite weighted graph models is also satisfied in any time-reflection-symmetric holographic spacetime. Hence the inequalities found above give the complete set of facets of \(\HMEC_3\). The same reasoning applies to the \(\HMEC_4\) case discussed in the next subsection.
}

Meanwhile, for the HEC, the only nontrivial facets are given by subadditivity,
\begin{equation}
S_A+S_B-S_C\ge 0,
\end{equation}
and its permutations.
It can be seen that in \(\HMEC_3\), this entropy inequality is resolved into a positive conic combination of
\(\HMEC_3\) facets:
\begin{equation}
2P_C+K_{ABC}
=
S_A+S_B-S_C .
\end{equation}
After fixing a common normalization, it becomes a convex
combination. 

This is the simplest example of how an entropy-only HEI can be refined by facets of HMEC.

\subsection{The \texorpdfstring{\(n=4\)}{n=4} cone and the MMI projection}\label{3.2}

When $n=4$, the exact \(\HMEC_4\) computation gives \(27\) concrete facets. They are
organized into 7 orbits up to \(S_4\) symmetry, that is, up to the permutation of $A,B,C$ and $D$.  \(2\) of the \(7\) orbits are lifts from lower-party HMEIs, and their representatives are obtained by applying the
two \(\HMEC_3\) facets to the effective three-block grouping \((AB),C,D\):
\begin{equation}
-
S^{(3)}(AB:C:D)
+
S_C
+
S_D
\ge 0 ,
\end{equation}
and
\begin{equation}
2S^{(3)}(AB:C:D)
-
S_{AB}
-
S_C
-
S_D
\ge 0 .
\end{equation}
The five remaining orbits give genuinely new inequalities involving four-partitite multi-entropies.  The representatives are
\begin{equation}
\begin{aligned}
F_{4.1}:&\quad
S^{(4)}(A:B:C:D)
-
S^{(3)}(AD:B:C)
-
S^{(3)}(A:BC:D)
+
S_{AD}
\ge 0 ,
\\[0.8ex]
F_{4.2}:&\quad
-
S^{(4)}(A:B:C:D)
+
S^{(3)}(AD:B:C)
+
S^{(3)}(A:BD:C)
\\
&\quad
+
S^{(3)}(A:B:CD)
-
S_A
-
S_B
-
S_C
\ge 0 ,
\\[0.8ex]
F_{4.3}:&\quad
-
S^{(4)}(A:B:C:D)
+
S^{(3)}(AD:B:C)
+
S^{(3)}(A:BC:D)
\\
&\quad
+
S_{AB}
+
S_{AC}
-
S_A
-
S_B
-
S_C
-
S_D
\ge 0 ,
\\[0.8ex]
F_{4.4}:&\quad
2S^{(4)}(A:B:C:D)
-
S_{AB}
-
S_{AC}
-
S_{AD}
\ge 0 ,
\\[0.8ex]
F_{4.5}:&\quad
3S^{(4)}(A:B:C:D)
-
S^{(3)}(AB:C:D)
-
S^{(3)}(AC:B:D)
\\
&\quad
-
S^{(3)}(AD:B:C)
-
S_A
\ge 0 .
\end{aligned}
\end{equation}

All these inequalities above
are certified by multicontraction maps, so they are valid for every
holographic graph model, and thus for every holographic state.
For extreme rays, note that the multi-entropy vector space has dimension \(14\).
The coordinate order is set as:
\begin{align}
(x_1,\ldots,x_{14})=
\big(&S_A,S_B,S_C,S_D;\; S_{AB},S_{AC},S_{AD}; \notag\\
&S^{(3)}(AB:C:D),S^{(3)}(AC:B:D),S^{(3)}(AD:B:C), \notag\\
&S^{(3)}(A:BC:D),S^{(3)}(A:BD:C),S^{(3)}(A:B:CD); \notag\\
&S^{(4)}(A:B:C:D)\big).
\label{eq:hmec4-coordinate-order}
\end{align}
The cone formed by the above $7$ orbits has \(49\) extreme
rays, organized into nine orbits. Note that here an extreme ray orbit also refers to the collection of an extreme ray and all of its permutations under $S_4$ symmetry. For each of the extreme ray orbits, we have found a corresponding graph model realization. Therefore, the \(7\) orbits above account for all facets of \(\HMEC_4\).  The exact extreme ray representatives and graph realizations are shown at orbit level in
Table~\ref{tab:hmec4-ray-reps} and
Figure~\ref{fig:hmec4-ray-graphs} in the next subsection.

Following the $n=3$ analysis, \(\HMEC_4\) should then be compared with the
standard HEC with three boundary regions and a purifier. In this case, the only new facet of HEC other than subadditivity is MMI
\cite{Hayden:2011ag}. Among the above new inequalities with four-partite multi-entropies, \(F_{4.1}\) and \(F_{4.3}\) were
independently found in recent work~\cite{BalasubramanianEtAlFourParty}; they
give a direct HMEC refinement of the standard MMI inequality
\begin{equation}
\mathrm{MMI}=
S_{AB}
+
S_{AC}
+
S_{AD}
-
S_A
-
S_B
-
S_C
-
S_D
=
-I_3(A:B:C)
\ge 0 ,
\end{equation}
obtained as
\begin{equation}
F_{4.1}
+
F_{4.3}
=
\mathrm{MMI} .
\end{equation}
Thus, the standard MMI is a positive conic combination of HMEC facet inequalities; after fixing a common normalization, it becomes a convex combination,
exactly as subadditivity is resolved in \(\HMEC_3\).

The complete \(\HMEC_3\) and \(\HMEC_4\) data are summarized in
Table~\ref{tab:low-dimensional-hmec-data}. 
\begin{table}[H]
\centering
\resizebox{\textwidth}{!}{%
\begin{tabular}{c|c|c|c|c|c|c|c}
\hline
Cone
&
Dimension
&
New facet orbits
&
Total facet orbits
&
Facets
&
Rays
&
Ray orbits
&
Graph rays
\\
\hline
\(\HMEC_3\)
&
\(4\)
&
\(2\)
&
\(2\)
&
\(4\)
&
\(4\)
&
\(2\)
&
\(2/2\)
\\
\(\HMEC_4\)
&
\(14\)
&
\(5\) basic
&
\(7=2+5\)
&
\(27\)
&
\(49\)
&
\(9\)
&
\(9/9\)
\\
\hline
\end{tabular}
}
\caption{
Low-dimensional HMEC data. The dimension refers to that of the multi-entropy vector space.  
}
\label{tab:low-dimensional-hmec-data}
\end{table}

\subsection{Extreme ray orbits and graph realization of \(\HMEC_4\)}

The seven facet-orbit representatives in the previous subsection, together with their permutations, give \(27\) concrete facet inequalities. The cone cut out by these \(27\)
inequalities has \(49\) extreme rays, grouped
into nine orbits. In Table~\ref{tab:hmec4-ray-reps}, We show one representative of each extreme ray orbit
in the coordinate order \eqref{eq:hmec4-coordinate-order}.

\begin{table}[h]
\centering
\scriptsize
\renewcommand{\arraystretch}{1.15}
\begin{tabular}{c@{\qquad}l}
\hline
orbit & representatives in the coordinate order \eqref{eq:hmec4-coordinate-order} \\
\hline
\(R_1\) & \((1,1,0,0;\;0,1,1;\;0,1,1,1,1,1;\;1)\) \\
\(R_2\) & \((1,0,1,1;\;1,1,1;\;2,1,1,2,2,1;\;2)\) \\
\(R_3\) & \((1,1,1,1;\;2,2,2;\;2,2,2,2,2,2;\;3)\) \\
\(R_4\) & \((2,1,1,1;\;2,2,2;\;2,2,2,3,3,3;\;3)\) \\
\(R_5\) & \((1,2,1,2;\;3,2,3;\;3,4,3,3,2,3;\;4)\) \\
\(R_6\) & \((1,3,2,2;\;4,3,3;\;4,5,5,3,3,4;\;5)\) \\
\(R_7\) & \((1,1,1,1;\;2,1,2;\;2,2,2,2,2,2;\;3)\) \\
\(R_8\) & \((2,1,1,2;\;3,2,2;\;3,3,2,4,3,3;\;4)\) \\
\(R_9\) & \((2,2,2,2;\;3,3,3;\;4,4,4,4,4,4;\;6)\) \\
\hline
\end{tabular}
\caption{Representatives for the \(9\) extreme ray orbits of
\(C^{\rm HMEC}_4\). The coordinate groups in each vestor separated by ``$;$" correspond respectively to
\((S_A,S_B,S_C,S_D)\), \((S_{AB},S_{AC},S_{AD})\), the six \(S^{(3)}\)
coordinates, and \(S^{(4)}(A:B:C:D)\) in the order
\eqref{eq:hmec4-coordinate-order}.}
\label{tab:hmec4-ray-reps}
\end{table}

For each representative in Table~\ref{tab:hmec4-ray-reps},
Figure~\ref{fig:hmec4-ray-graphs} displays a weighted graph whose full
multi-entropy vector agrees with it: all fourteen coordinates in \eqref{eq:hmec4-coordinate-order} are
realized by exact multiway-cut minimizations.

\begin{figure}[h]
\centering
\begin{tikzpicture}[
  scale=0.82,
  every node/.style={transform shape},
  bdot/.style={circle,fill=black,draw=black,inner sep=1.25pt},
  edge/.style={line width=0.45pt},
  elabel/.style={font=\scriptsize,inner sep=1pt,fill=white,text=red!70!black},
  rlabel/.style={font=\small}
]
% R1 = old R9
\begin{scope}[xshift=-1.05cm,yshift=0cm]
  \node[rlabel] at (-0.55,1.45) {\(R_1\)};
  \node[bdot,label={[font=\small]above:\(A\)}] (r1A) at (0,0.55) {};
  \node[bdot,label={[font=\small]above:\(B\)}] (r1B) at (1.10,0.55) {};
  \node[bdot,label={[font=\small]below:\(C\)}] (r1C) at (0,-0.85) {};
  \node[bdot,label={[font=\small]below:\(D\)}] (r1D) at (1.10,-0.85) {};
  \draw[edge] (r1A)--node[elabel,pos=.50,above] {\(1\)} (r1B);
\end{scope}

% R2 = old R7
\begin{scope}[xshift=4.75cm,yshift=0cm]
  \node[rlabel] at (-1.35,1.45) {\(R_2\)};
  \node[bdot,label={[font=\small]above:\(A\)}] (r2A) at (0,1.05) {};
  \node[bdot,label={[font=\small]right:\(B\)}] (r2B) at (1.05,0) {};
  \node[bdot,label={[font=\small]below:\(C\)}] (r2C) at (0,-1.05) {};
  \node[bdot,label={[font=\small]left:\(D\)}] (r2D) at (-1.05,0) {};
  \node[bdot] (r2o) at (0,0) {};
  \draw[edge] (r2A)--node[elabel,pos=.55,right] {\(1\)} (r2o);
  \draw[edge] (r2C)--node[elabel,pos=.55,right] {\(1\)} (r2o);
  \draw[edge] (r2D)--node[elabel,pos=.55,above] {\(1\)} (r2o);
\end{scope}

% R3 = old R5
\begin{scope}[xshift=9.50cm,yshift=0cm]
  \node[rlabel] at (-1.35,1.45) {\(R_3\)};
  \node[bdot,label={[font=\small]above:\(A\)}] (r3A) at (0,1.05) {};
  \node[bdot,label={[font=\small]right:\(B\)}] (r3B) at (1.05,0) {};
  \node[bdot,label={[font=\small]below:\(C\)}] (r3C) at (0,-1.05) {};
  \node[bdot,label={[font=\small]left:\(D\)}] (r3D) at (-1.05,0) {};
  \node[bdot] (r3o) at (0,0) {};
  \draw[edge] (r3A)--node[elabel,pos=.55,right] {\(1\)} (r3o);
  \draw[edge] (r3B)--node[elabel,pos=.55,above] {\(1\)} (r3o);
  \draw[edge] (r3C)--node[elabel,pos=.55,right] {\(1\)} (r3o);
  \draw[edge] (r3D)--node[elabel,pos=.55,above] {\(1\)} (r3o);
\end{scope}

% R4 = old R8
\begin{scope}[xshift=0cm,yshift=-3.35cm]
  \node[rlabel] at (-1.35,1.45) {\(R_4\)};
  \node[bdot,label={[font=\small]above:\(A\)}] (r4A) at (0,1.05) {};
  \node[bdot,label={[font=\small]right:\(B\)}] (r4B) at (1.05,0) {};
  \node[bdot,label={[font=\small]below:\(C\)}] (r4C) at (0,-1.05) {};
  \node[bdot,label={[font=\small]left:\(D\)}] (r4D) at (-1.05,0) {};
  \node[bdot] (r4o) at (0,0) {};
  \draw[edge] (r4A)--node[elabel,pos=.55,right] {\(2\)} (r4o);
  \draw[edge] (r4B)--node[elabel,pos=.55,above] {\(1\)} (r4o);
  \draw[edge] (r4C)--node[elabel,pos=.55,right] {\(1\)} (r4o);
  \draw[edge] (r4D)--node[elabel,pos=.55,above] {\(1\)} (r4o);
\end{scope}

% R5 = old R4
\begin{scope}[xshift=4.75cm,yshift=-3.35cm]
  \node[rlabel] at (-1.35,1.45) {\(R_5\)};
  \node[bdot,label={[font=\small]above:\(A\)}] (r5A) at (0,1.05) {};
  \node[bdot,label={[font=\small]right:\(B\)}] (r5B) at (1.05,0) {};
  \node[bdot,label={[font=\small]below:\(C\)}] (r5C) at (0,-1.05) {};
  \node[bdot,label={[font=\small]left:\(D\)}] (r5D) at (-1.05,0) {};
  \node[bdot] (r5o) at (0,0) {};
  \draw[edge] (r5A)--node[elabel,pos=.55,right] {\(1\)} (r5o);
  \draw[edge] (r5B)--node[elabel,pos=.55,above] {\(2\)} (r5o);
  \draw[edge] (r5C)--node[elabel,pos=.55,right] {\(1\)} (r5o);
  \draw[edge] (r5D)--node[elabel,pos=.55,above] {\(2\)} (r5o);
\end{scope}

% R6 = old R1
\begin{scope}[xshift=9.50cm,yshift=-3.35cm]
  \node[rlabel] at (-1.35,1.45) {\(R_6\)};
  \node[bdot,label={[font=\small]above:\(A\)}] (r6A) at (0,1.05) {};
  \node[bdot,label={[font=\small]right:\(B\)}] (r6B) at (1.05,0) {};
  \node[bdot,label={[font=\small]below:\(C\)}] (r6C) at (0,-1.05) {};
  \node[bdot,label={[font=\small]left:\(D\)}] (r6D) at (-1.05,0) {};
  \node[bdot] (r6o) at (0,0) {};
  \draw[edge] (r6A)--node[elabel,pos=.55,right] {\(1\)} (r6o);
  \draw[edge] (r6B)--node[elabel,pos=.55,above] {\(3\)} (r6o);
  \draw[edge] (r6C)--node[elabel,pos=.55,right] {\(2\)} (r6o);
  \draw[edge] (r6D)--node[elabel,pos=.55,above] {\(2\)} (r6o);
\end{scope}

% R7 = old R3
\begin{scope}[xshift=-0.65cm,yshift=-7.05cm]
  \node[rlabel] at (0.65,1.62) {\(R_7\)};
  \node[bdot,label={[font=\small]above:\(A\)}] (r7A) at (-0.55,0.95) {};
  \node[bdot,label={[font=\small]above:\(B\)}] (r7B) at (1.85,0.95) {};
  \node[bdot,label={[font=\small]below:\(C\)}] (r7C) at (-0.55,-0.95) {};
  \node[bdot,label={[font=\small]below:\(D\)}] (r7D) at (1.85,-0.95) {};
  \node[bdot] (r7p) at (0,0) {};
  \node[bdot] (r7q) at (1.30,0) {};
  \draw[edge] (r7A)--node[elabel,pos=.55,left] {\(1\)} (r7p);
  \draw[edge] (r7C)--node[elabel,pos=.55,left] {\(1\)} (r7p);
  \draw[edge] (r7B)--node[elabel,pos=.55,right] {\(1\)} (r7q);
  \draw[edge] (r7D)--node[elabel,pos=.55,right] {\(1\)} (r7q);
  \draw[edge] (r7p)--node[elabel,pos=.50,above] {\(1\)} (r7q);
\end{scope}

% R8 = old R2
\begin{scope}[xshift=4.10cm,yshift=-7.05cm]
  \node[rlabel] at (0.65,1.62) {\(R_8\)};
  \node[bdot,label={[font=\small]above:\(A\)}] (r8A) at (1.85,0.95) {};
  \node[bdot,label={[font=\small]above:\(B\)}] (r8B) at (-0.55,0.95) {};
  \node[bdot,label={[font=\small]below:\(C\)}] (r8C) at (1.85,-0.95) {};
  \node[bdot,label={[font=\small]below:\(D\)}] (r8D) at (-0.55,-0.95) {};
  \node[bdot] (r8p) at (0,0) {};
  \node[bdot] (r8q) at (1.30,0) {};
  \draw[edge] (r8B)--node[elabel,pos=.55,left] {\(1\)} (r8p);
  \draw[edge] (r8D)--node[elabel,pos=.55,left] {\(2\)} (r8p);
  \draw[edge] (r8A)--node[elabel,pos=.55,right] {\(2\)} (r8q);
  \draw[edge] (r8C)--node[elabel,pos=.55,right] {\(1\)} (r8q);
  \draw[edge] (r8p)--node[elabel,pos=.50,above] {\(2\)} (r8q);
\end{scope}

% R9 = old R6
\begin{scope}[xshift=9.50cm,yshift=-7.40cm]
  \node[rlabel] at (-1.55,2.05) {\(R_9\)};
  \node[bdot,label={[font=\small]left:\(A\)}] (r9A) at (-1.95,-0.25) {};
  \node[bdot,label={[font=\small]above:\(B\)}] (r9B) at (0,1.90) {};
  \node[bdot,label={[font=\small]right:\(C\)}] (r9C) at (2.20,0.55) {};
  \node[bdot,label={[font=\small]right:\(D\)}] (r9D) at (1.95,-0.25) {};
  \node[bdot] (r9p) at (-0.85,0.65) {};
  \node[bdot] (r9q) at (0.85,0.65) {};
  \node[bdot] (r9r) at (0,-0.85) {};
  \node[bdot] (r9s) at (0,0.10) {};
  \draw[edge] (r9A)--node[elabel,pos=.55,above left] {\(1\)} (r9p);
  \draw[edge] (r9A)--node[elabel,pos=.55,below left] {\(1\)} (r9r);
  \draw[edge] (r9B)--node[elabel,pos=.52,left] {\(1\)} (r9p);
  \draw[edge] (r9B)--node[elabel,pos=.52,right] {\(1\)} (r9q);
  \draw[edge] (r9C)--node[elabel,pos=.53,below,yshift=-1pt] {\(2\)} (r9s);
  \draw[edge] (r9D)--node[elabel,pos=.55,right] {\(1\)} (r9q);
  \draw[edge] (r9D)--node[elabel,pos=.55,left] {\(1\)} (r9r);
  \draw[edge] (r9p)--node[elabel,pos=.55,above] {\(1\)} (r9s);
  \draw[edge] (r9q)--node[elabel,pos=.55,above] {\(1\)} (r9s);
  \draw[edge] (r9r)--node[elabel,pos=.50,right] {\(1\)} (r9s);
\end{scope}
\end{tikzpicture}
\caption{Graph representatives for the \(9\) extreme ray orbits of
\(C^{\rm HMEC}_4\). The dots labeled by
\(A,B,C,D\) are the boundary vertices, while unlabeled dots represent internal graph vertices. Isolated boundary vertices are also included explicitly. Edge weights are labeled in red on the corresponding edge.  The full multi-entropy vector of each graph is the corresponding ray
listed in Table~\ref{tab:hmec4-ray-reps}.}
\label{fig:hmec4-ray-graphs}
\end{figure}
Thus, the \(27\) certified inequalities yield the
complete facets of the $n=4$ holographic multi-entropy cone.

\subsection*{Complexity of the \texorpdfstring{\(\HMEC_5\)}{HMEC5} problem}

Having completely identified the \(\HMEC_3\) and \(\HMEC_4\), the next natural step would be to study
\(\HMEC_5\). However, the calculation of this case is already comparable in size to the largest standard
HEC computations currently available: 
the holographic multi-entropy vector space where \(\HMEC_5\) lives in has dimension
\[
\dim M_5=B_5-1=51 .
\]
Its superbalanced sector already has dimension \(41\)~\cite{GeneralizedBalance2602}.
This is only one dimension below the superbalanced sector of the standard
six-region HEC, whose dimension is \(42\). \

The comparison is
instructive: six-region HEC computations have found \(1877\) HEI orbits, and a complete facet enumeration
is not currently known
\cite{HernandezCuencaHubenyJiaHEIs,HeHubenyRotaN6}.
The \(\HMEC_5\) problem is therefore not expected to be easier. Although
the number of total labels is smaller, graph realization must match all
partition-labeled multiway-cut coordinates rather than only bipartition
min-cuts; moreover, the \(q\ge3\) coordinates involve NP-hard multiway-cut
optimization. Thus a complete enumeration of \(C^{\rm HMEC}_5\) is unlikely to
be a short extension of the \(n=4\) computation.

We therefore do not attempt a complete computation of \(C^{\rm HMEC}_5\) here.
Instead, Section~\ref{sec:conjectures} uses the complete \(n=3,4\) data as a
guide to extract two structural conjectures for HMEC facets. The second conjecture concerns
the balance structure of HMEC facets and examines an \(11\)-dimensional
highly balanced subspace of its multi-entropy space, thereby revealing a concrete structural feature of \(C^{\rm HMEC}_5\).

\section{Two guiding conjectures for HMEC facets}
\label{sec:conjectures}

Having obtained the complete facets of the holographic multi-entropy cone for
\(n=3\) and \(n=4\), we now turn to the structural feature suggested by these
results. The purpose of this section is to isolate patterns
in the behavior of HMEIs and to explain how the
HMEC refines the standard holographic entropy cone. The first conjecture states
that, after a common normalization, standard HEC facet inequalities arise from
convex combinations of HMEC facet inequalities. The second is the
balanced-but-not-too-balanced conjecture: in the multi-entropy scenario, we
generalize the definition of superbalance to \(k\)-party balance and conjecture
that holographically sign-definite multipartite signals should occupy a
restricted range of party-balance sectors, rather than living in arbitrarily
high-party balanced spaces. Both conjectures are supported by the complete HMEC
results for \(n=3\) and \(n=4\), together with the \(n=5\) data discussed below.

\subsection{HEC facets as convex combinations of HMEC facets}

The first guiding conjecture concerns the relation between standard HEIs and
HMEIs. Since the HMEC enlarges the coordinate system of HEC from
bipartite entropies to multi-entropies, it is natural to ask
whether familiar HEC facets remain facets of HMEC,
or instead arise from of more elementary multi-entropy constraints. We
conjecture that the latter is the correct picture: every HEC facet inequality should be
obtained from a positive combination of HMEC facet inequalities. If this is true, it provides a new motivation for studying HMEC:
from the HMEC perspective, HEC facet inequalities are not fundamental
holographic constraints; rather, their building blocks lie in the HMEC facets.

{We now formulate this conjecture more precisely in terms of the normal vectors defining the facets of HEC and HMEC.
}We define a facet normal \(h\) as the normal multi-entropy vector of a facet
passing through the origin, so that any vector \(P\) lying on this facet satisfy $P\cdot h=0$. {With the boundary divided into $n$ subregions, including the purifier,} we now embed an HEC facet normal
\(h\) into the multi-entropy vector space by assigning zero coefficients to all
multi-entropy coordinates, and denote the resulting vector by \(\tilde h\). We conjecture that \(\tilde h\)
lies in the positive span of HMEC facet normals:
\begin{equation}
  \tilde h=\sum_a \lambda_a F_a,
  \qquad
  \lambda_a>0,
\end{equation}
where the \(F_a\) are facet normals of \(C^{\mathrm{HMEC}}_n\). Equivalently, on the right-hand-side the components along multi-entropy coordinates cancel in the sum,
leaving an entropy-sector facet normal. After fixing a common normalization,
this gives the advertised convex-combination statement for the entropy-sector
projections of HMEC facet normals. The standard entropy cone should therefore be viewed as the projection of the multi-entropy cone onto entropy coordinates. In this projection, the 
multipartite coordinates are discarded, and the finer HMEC constraints
collapse into ordinary HEC facets.

This conjecture can also be phrased as a loss-of-rank statement. {This idea is illustrated schematically through a toy model in Figure~\ref{fig:hei-hmec-schematic}, where one HEC facet, represented by the toy-model analogue $x>0$, is resolved into two finer HMEC facets, modeled by $x+y>0$ and $x-y>0$ respectively, with $y$ representing a multi-entropy coordinate.}
Away from the HEC facet, an entropy vector can, in general, be lifted to multi-entropy coordinates with independent multipartite components. However, when the entropy vector is forced to lie on the HEC facet,  we expect its freedom to have independent multipartite components to disappear. {This is immediate in the toy model: imposing both HMEC facet inequalities forces any vector on the HEC facet $x=0$ to satisfy $y=0$ as well.} In other words, the compatible multi-entropy coordinates should satisfy additional linear relations, either among the multi-entropy coordinates themselves or between multi-entropy coordinates and entropy coordinates. Thus, in the enlarged HMEC space, a standard HEC facet is detected by a drop in the number of independent multi-entropy directions.

\begin{figure}[h]
\centering

\begin{tikzpicture}[
    x=1cm,
    y=1cm,
    >=stealth,
    axis/.style={black, thick},
    hei/.style={black, very thick},
    hmecline/.style={blue!60!black, thick},
    hmecregion/.style={red!18},
    heiregion/.style={gray!12},
    labelstyle/.style={font=\small}
]

\def\xmin{-1.0}
\def\xmax{5.2}
\def\ymin{-3.0}
\def\ymax{3.0}

% ------------------------------------------------------------
% Background: HEI half-plane x>0
% ------------------------------------------------------------
\fill[heiregion]
(0,\ymin)
rectangle
(\xmax,\ymax);

\node[labelstyle, gray!60!black]
at (4.25,2.55)
{\(\mathrm{HEI}\ge 0\)};

% ------------------------------------------------------------
% HMEC region: x+y>=0 and x-y>=0, i.e. x >= |y|
% ------------------------------------------------------------
\fill[hmecregion]
(0,0)
--
(\ymax,\ymax)
--
(\xmax,\ymax)
--
(\xmax,\ymin)
--
({-\ymin},\ymin)
--
cycle;

% Optional red boundary emphasis for the filled cone
\draw[red!45!black, line width=0.6pt]
(0,0)
--
(\ymax,\ymax);

\draw[red!45!black, line width=0.6pt]
(0,0)
--
({-\ymin},\ymin);

\node[labelstyle, red!65!black]
at (3.85,0.55)
{\(\mathrm{HMEC}\)};

% ------------------------------------------------------------
% Coordinate axes
% ------------------------------------------------------------

% Negative x-axis, thin
\draw[axis]
(\xmin,0)
--
(0,0);

% Positive x-axis, thick and with arrow
\draw[hei,->]
(0,0)
--
(\xmax+0.35,0)
node[right,labelstyle]
{\(\mathrm{HEI}\)};

% y-axis, dash-dot
\draw[dash dot, thick]
(0,\ymin-0.2)
--
(0,\ymax+0.25)
node[above,labelstyle]
{multi-entropy axis};

% ------------------------------------------------------------
% Boundary lines x+y=0 and x-y=0
% ------------------------------------------------------------
\draw[hmecline]
(\xmin,-\xmin)
--
({-\ymin},\ymin)
node[above right,labelstyle]
{\(x+y=0\)};

\draw[hmecline]
(\xmin,\xmin)
--
(\ymax,\ymax)
node[below right,labelstyle]
{\(x-y=0\)};

% ------------------------------------------------------------
% Origin and optional axis marks
% ------------------------------------------------------------
\fill[black]
(0,0)
circle
(1.2pt);

\node[labelstyle, below left]
at (0,0)
{\(0\)};

\end{tikzpicture}

\caption{
{A toy model showing the} relation between an entropy-sector inequality and its multi-entropy refinement. The horizontal axis denotes the standard HEI direction, while the vertical axis represents a multi-entropy component. The red cone is cut out by the two HMEC inequalities \(x+y\ge0\) and \(x-y\ge0\), and its projection gives the standard HEI constraint \(x\ge0\).
}
\label{fig:hei-hmec-schematic}

\end{figure}

The simplest intuitive example comes from mutual information. Consider when the boundary is divided into three
boundary regions \(A,B,C\), and denote their individual entanglement wedges by
\(\operatorname{EW}(A)\), \(\operatorname{EW}(B)\), and
\(\operatorname{EW}(C)\). Heuristically, the geometric dual of
\(S^{(3)}(A:B:C)\) probes the bulk support of genuinely tripartite correlations
outside the three individual entanglement wedges,
\(\operatorname{EW}(A)\cup \operatorname{EW}(B)\cup \operatorname{EW}(C)\).
It therefore captures information that is not visible from the separate
bipartition entropy data of the individual regions, and remain invariant
under independent unitary transformations within \(A\), \(B\), and \(C\).

When an HEC facet inequality is saturated, e.g. \(S_A+S_B=S_{AB}\), the bulk
geometry undergoes a corresponding wedge factorization~\cite{CzechShuaiWangQEC,CzechShuaiRGHEC}:
schematically,
\begin{equation}
  \operatorname{EW}(A)\cup\operatorname{EW}(B)=\operatorname{EW}(AB).
\end{equation}
In this limit, the tripartite web dual to \(S^{(3)}(A:B:C)\) has less geometric room to move: instead of
residing in a generic codimension-zero bulk chamber, it is forced to follow the
factorized wedge structure, so the available room degenerates to a
lower-dimensional region determined by the interfaces of the entanglement wedges.
Thus, the saturation of the entropy facet inequality is reflected in a loss of generic
bulk room for the tripartite minimal multi-entropy surface. In such a degenerate
configuration, \(S^{(3)}(A:B:C)\) is no longer an independent
multi-entropy coordinate: its remaining contribution may be expressible in
terms of entropy data. In the present case, we have $S^{(3)}(A:B:C)=S_A+S_B,$
which gives a concrete linear relation between tripartite multi-entropy and
standard entropy coordinates on the HEC facet \(I(A:B)=0\).

Compatible evidence for the \(n=4\) HEC facet \(\MMI\) comes from the recent
four-party analysis \cite{BalasubramanianEtAlFourParty}, where known four-party entanglement signals vanish when the tripartite
information \(I_3\) vanishes. Thus, saturation of the holographic entropy
constraint \(I_3=0\) imposes additional relations among the corresponding
multi-entropy signals, providing a higher-party version of the dependent tripartite
multi-entropy coordinate picture above. This is consistent with our result in
Section~\ref{sec:low-dimensional}, where MMI is
obtained as the sum of two HMEC facets.

Therefore, the known low-party HEC facets provide the first checks of the
conjecture. The mutual-information and $I_3$ examples show how saturation of HEC facets force multi-entropy coordinates to lose their independent support and reduce to lower-order entropy data. Meanwhile, our $n=3,4$ HMEC computation show that subadditivity and MMI are both positive sums of HMEC facets. 
A systematic study of this conjecture, including higher-party HEC facets and
possible HMEC refinements, will appear in future work~\cite{CzechFengJuZhaoInProgress}.

\subsection{Balanced-but-not-too-balanced conjecture}

The second guiding conjecture starts from the well-known fact that, apart from
subadditivity, all known HEC facets are superbalanced~\cite{HeHubenyRangamaniSuperbalance}. For a standard linear entropy combination
\(Q=\sum_{\emptyset\neq I\subseteq[n]} c_I S_I\), the usual balance
condition requires \(\sum_{I\ni a} c_I=0\) for every party \(a\); the
superbalance condition requires \(Q\) to remain balanced after any exchange of a
boundary subsystem with the purifier. For example, mutual information
\begin{equation}
I_2(A:B)=S_A+S_B-S_{AB}
\end{equation}
is balanced but not superbalanced. To generalize the definition of superbalance
to the multi-entropy combination scenario, we use another property of
superbalanced combinations: when added a Bell pair between any two boundary
subsystems, although the individual entropy or multi-entropy coordinates may
change, the value of a superbalanced linear combination remains unchanged. This
zero-response language can be naturally generalized into the definition of
higher-party balanceness: a \(3\)-party balanced quantity is required to have zero
response when an arbitrary pure tripartite state, such as a GHZ state or a
\(W\) state, is added to any chosen triple of boundary subsystems. More
generally, a \(k\)-party balanced quantity has zero response to inserting an
arbitrary pure \(k\)-party auxiliary state on any chosen \(k\)-tuple of boundary
parties.

In \cite{GeneralizedBalance2602}, a framework was developed to derive all
\(k\)-party balanced multi-entropy combinations, with genuine multi-entropy
signals \cite{IizukaGenuineMultiEntropy,GaddeGenuineSignals} as special cases.
The result shows that the \(k\)-party balanced signals form a subspace in the multi-entropy signal space\footnote{{Note that this space is generally identical to the multi-entropy vector space. It contains the signals written as linear combinations of multi-entropies. The dimensions are again the nontrivial partitions corresponding to the multi-entropies, but the vectors encode the coefficients in each multi-entropy term instead of multi-entropy values, and can have negative coordinates. When there are $n$ boundary parties, we denote this signal space by MEMS$_n$ \cite{GeneralizedBalance2602}.}}. By this definition, the \(n\)-partite information
quantities \(I_3\) and \(I_4\) are \(3\)-party balanced, \(I_5\) and \(I_6\)
are \(5\)-party balanced, and the hierarchy continues to higher-party balance
levels for higher alternating information quantities. Table~\ref{tab:balance-classes} summarizes this balance hierarchy. The 3-party balanced sector has \(I_3\) and the
HMEC facets \(F_{4.1}\) and \(F_{4.3}\) as representatives. The 4-party balanced row contains the eleven-dimensional
five-label signal space analyzed below, while the final row records the formula that calculates the dimension of the $k$-party balanced sector in the multi-entropy signal space MEMS\(_n\).

In the information-basis organization of known HEC facets
~\cite{HubenyRangamaniRotaEntropyArrangement,HeHubenyRangamaniSuperbalance},
the leading structures are built from tripartite information \(I_3\). No known
facet is controlled by \(I_5\) or by a higher-party balanced information
quantity. This motivates the slogan
\emph{balanced-but-not-too-balanced}: holographically sign-definite quantities
require enough cancellation power to remove trivial lower-order responses, but
too much cancellation power can poses additional challenges for
sign-definiteness.

\begin{table}[h]
\centering
\begingroup
\small
\renewcommand{\arraystretch}{1.18}
\begin{tabular}{c|c}
\hline
\textbf{balance class} & \textbf{representative signals} \\
\hline
not superbalanced
&
\begin{tabular}{c}
\(I_2(A:B)=S_A+S_B-S_{AB}\) \\
\(S_A+S_B-S^{(3)}(A:B:C)\)
\end{tabular}
\\
\hline
2-party balance
&
\begin{tabular}{c}
\(S^{(3)}(A:B:C)-\frac12\bigl(S_A+S_B+S_C\bigr)\)\\
\(F_{4.2}\), \quad \(F_{4.4}\), \quad \(F_{4.5}\), \quad \(GM^{(3)}(A:B:C)\)
\end{tabular}
\\
\hline
3-party balance
&
\begin{tabular}{c}
\(I_3\), \quad \(F_{4.1}\), \quad \(F_{4.3}\), \quad \(GM^{(4)}(A:B:C:D)\) \\
Four-dimensional subspace in MEMS\(_4\)
\end{tabular}
\\
\hline
4-party balance
&
\begin{tabular}{c}
\(\operatorname{span}\{M_5,\mathcal E_{AB},\mathcal E_{AC},\mathcal E_{AD},\mathcal E_{AE},\) \\
\(\quad \mathcal E_{BC},\mathcal E_{BD},\mathcal E_{BE},\mathcal E_{CD},\mathcal E_{CE},\mathcal E_{DE}\}\), \quad e.g. \(GM^{(5)}\)
\end{tabular}
\\
\hline
k-party balance
&
\begin{tabular}{c}
\(\displaystyle
\binom{n}{k+1}\sum_{j=0}^{k+1}(-1)^j\binom{k+1}{j}\,B_{k-j+1}
\) \\
dimensional subspace in MEMS\(_n\)~\cite{GeneralizedBalance2602}, \quad e.g. \(GM^{(k+1)}\)
\end{tabular}
\\
\hline
\end{tabular}
\endgroup
\caption{
Reference balance classes for the balanced-but-not-too-balanced conjecture. The table separates standard balanced quantities, such as \(I_2\), from the higher-party balance hierarchy relevant for genuinely multipartite multi-entropy signals.
}
\label{tab:balance-classes}
\end{table}

\vspace{0.3cm}
\noindent\textbf{Conjecture (balanced-but-not-too-balanced).}
Apart from the $P_A$-type facets:
\begin{equation}
    S_A+S_B-S^{(3)}(A:B:C)\ge 0,
\end{equation}
together with its permutations and their lifts, every HMEC facet inequality is either
\(2\)-party balanced or \(3\)-party balanced. 
\vspace{0.3cm}

The point of the conjecture is that the allowed balance range is sharply restricted. The exceptional
$P_A$ orbit is not superbalanced; it can be viewed as the multi-entropy counterpart of subadditivity. Once this exceptional orbit is removed, the complete
\(\HMEC_3\) and \(\HMEC_4\) data show that all HMEC facets fall into the
\(2\)- or \(3\)-party balanced sectors as listed in
Table~\ref{tab:balance-classes}. In view of the fact that HEC
facets are superbalanced, apart from subadditivity
\cite{HeHubenyRangamaniSuperbalance}, the superbalance property passing from HEC to
HMEC is not by itself surprising. The genuinely nontrivial part is the phrase
\emph{not too balanced}. It says that high-party balanced signal
spaces, although well motivated by symmetry and cancellation of lower-party entanglement contributions, can be too
constrained to contain any nonzero holographically sign-definite direction. The
$\HMEC_5$ test below verifies this mechanism in the first relevant
high-party balanced sector: we prove that the \(11\)-dimensional four-party balanced subspace of
MEMS\(_5\) contains no HMEI. This provides direct
evidence that HMEC facets are confined to the \(2\)- and \(3\)-party
balanced sectors, apart from $P_A$.

Before proving the absence of HMEI in the \(11\)-dimensional four-party balanced subspace of
MEMS\(_5\), we first use a simple three-dimensional HEC as toy model to explain what happens when every linear combination inside a candidate (multi)entropy subspace fails to define an HMEC-valid inequality.
Consider the entropy cone of a pure three-party system, written in the
independent coordinates \((S_A,S_B,S_C)\), defined by
\begin{equation}
  S_A+S_B\ge S_C,\qquad
  S_A+S_C\ge S_B,\qquad
  S_B+S_C\ge S_A.
\end{equation}
Now consider the entanglement signal subspace in which a vector corresponds to an entanglement signal, as shown in Figure \ref{fig:balance-obstruction-3d}:\footnote{Strictly speaking, the entropy signal space where $M$ resides in is physically different from the entropy space that contain entropy vectors. In the former case coordinates represent coefficients, rather than entropy values. However, these spaces are formally identical to each other, and for convenience we draw $M$ and the entropy cone in the same space.}
\begin{equation}
  M=\{u_A S_A+u_B S_B+u_C S_C:\ u_A+u_B+u_C=0\},
\end{equation}
or equivalently
\begin{equation}
  M=\{u\in\mathbb R^3:\ u_A+u_B+u_C=0\}.
\end{equation}
Its normal vector is
\begin{equation}
  M^\perp=\spanop\{(1,1,1)\}.
\end{equation}
This is the entropy vector that vanishes on all signals inside $M$. It lies strictly inside the entropy cone, since \(1+1>1\)
for all three triangle inequalities. Therefore, for every entanglement signal \(u\in M\),
the hyperplane $P_u$ defined by \(u\cdot P_u=0\), that contain all entropy vectors vanishing on $u$, passes through an interior direction of the cone.
Such a hyperplane cannot support the cone; it must slice through it. Thus, some entropy vectors in the HEC give positive values of $u$, while others on the other side of $P_u$ contribute to negative $u$.
Consequently, no signal in \(M\) has a definite holographic sign. This
toy example captures the obstruction for high-party balanced subspaces of the HMEC, in alignment with our conjecture. Note that \(M\) is only an illustrative model, not itself a high-party balanced sector. It provides an analogue to the obstruction for the HMEC high-party balanced subspace:
Every nonzero signal in a
high-party balanced subspace defines a  hyperplane perpendicular to it that contains the multi-entropy vectors vanishing on this signal. However, none of these hyperplanes supports the HMEC. Instead, each one cuts through the cone, so the corresponding linear
signal has both positive and negative values on holographic
multi-entropy vectors.

\begin{figure}[h]
\centering
\tdplotsetmaincoords{72}{118}

\scalebox{0.9}{%
\begin{tikzpicture}[
    tdplot_main_coords,
    scale=1.72,
    >=stealth,
    line join=round,
    line cap=round,
    every node/.style={font=\scriptsize}
]

% ------------------------------------------------------------
% Coordinates
% ------------------------------------------------------------

\coordinate (O) at (0,0,0);

% Axes
\coordinate (XA) at (3.4,0,0);
\coordinate (XB) at (0,3.4,0);
\coordinate (XC) at (0,0,3.4);

% HEC extreme rays
\coordinate (RAB) at (2.7,2.7,0);
\coordinate (RAC) at (2.7,0,2.7);
\coordinate (RBC) at (0,2.7,2.7);

% Plane M : SA+SB+SC = 0
\coordinate (M1) at ( 2.7,-0.8,-1.9);
\coordinate (M2) at (-0.8, 2.7,-1.9);
\coordinate (M3) at (-2.7, 0.8, 1.9);
\coordinate (M4) at ( 0.8,-2.7, 1.9);

% Interior normal n=(1,1,1)
\coordinate (N) at (1.55,1.55,1.55);

% Choose u in M
\coordinate (U) at (1.75,-1.75,0);

% Plane Pu : u.S = 0, here u=(1,-1,0), so x-y=0
\coordinate (P1) at ( 2.30, 2.30,-1.20);
\coordinate (P2) at ( 0.00, 0.00, 3.20);
\coordinate (P3) at (-2.30,-2.30, 1.20);
\coordinate (P4) at ( 0.00, 0.00,-3.20);

% Three sample points on cone
\coordinate (C1) at (2.7,2.7,0);
\coordinate (C2) at (2.7,0,2.7);
\coordinate (C3) at (0,2.7,2.7);

% Orthogonal projections to M along (1,1,1)
\coordinate (Q1) at ( 0.9, 0.9,-1.8);
\coordinate (Q2) at ( 0.9,-1.8, 0.9);
\coordinate (Q3) at (-1.8, 0.9, 0.9);

% Exact intersection of Pu with top cap of the truncated cone
% CutA is RAB, and CutB is midpoint of edge RAC--RBC
\coordinate (CutA) at (2.7,2.7,0);
\coordinate (CutB) at (1.35,1.35,2.7);

% ------------------------------------------------------------
% Background planes
% ------------------------------------------------------------

% Plane M
\filldraw[
    fill=green!18,
    draw=green!45!black,
    opacity=0.38
]
(M1)--(M2)--(M3)--(M4)--cycle;

% ------------------------------------------------------------
% HEC cone (truncated)
% ------------------------------------------------------------

\filldraw[
    fill=red!22,
    draw=red!60!black,
    opacity=0.34
]
(O)--(RAB)--(RAC)--cycle;

\filldraw[
    fill=red!22,
    draw=red!60!black,
    opacity=0.34
]
(O)--(RAC)--(RBC)--cycle;

\filldraw[
    fill=red!22,
    draw=red!60!black,
    opacity=0.34
]
(O)--(RBC)--(RAB)--cycle;

\filldraw[
    fill=red!14,
    draw=red!60!black,
    opacity=0.28
]
(RAB)--(RAC)--(RBC)--cycle;

% ------------------------------------------------------------
% Slice Pu ∩ HEC : draw it as a face lying in Pu
% ------------------------------------------------------------

\filldraw[
    fill=blue!45,
    draw=black,
    opacity=0.32,
    dash dot,
    line width=0.9pt
]
(O)--(CutA)--(CutB)--cycle;

% emphasize boundary of the slice
\draw[dash dot, very thick, black] (O)--(CutA);
\draw[dash dot, very thick, black] (O)--(CutB);
\draw[dash dot, very thick, black] (CutA)--(CutB);

% ------------------------------------------------------------
% Axes
% ------------------------------------------------------------

\draw[->, thick] (O)--(XA) node[below left] {$S_A$};
\draw[->, thick] (O)--(XB) node[below right] {$S_B$};
\draw[->, thick] (O)--(XC) node[above] {$S_C=S_{AB}$};

% ------------------------------------------------------------
% Labels
% ------------------------------------------------------------

\node[green!50!black] at (1.25,-0.75,-1.35)
{$M:\ u_A+u_B+u_C=0$};

\node[black] at (2.72,2.58,2.80)
{$P_u\cap \mathrm{HEC}$};

% ------------------------------------------------------------
% extreme rays
% ------------------------------------------------------------

% ------------------------------------------------------------
% Normal vector to M
% ------------------------------------------------------------

\draw[->, very thick, purple!80!black]
(O)--(N);

\node[purple!70!black] at (0.50,0.96,0.62)
{$(1,1,1)$};

% ------------------------------------------------------------
% Vector u in M
% ------------------------------------------------------------

\draw[->, very thick, orange!85!black]
(O)--(U)
node[pos=1.03, below] {$u\in M$};

% ------------------------------------------------------------
% Projection lines to M
% ------------------------------------------------------------

\draw[dashed, thick, gray!75!black] (C1)--(Q1);
\draw[dashed, thick, gray!75!black] (C2)--(Q2);
\draw[dashed, thick, gray!75!black] (C3)--(Q3);

\draw[dashed, thick, gray!80!black] (Q1)--(Q2)--(Q3)--cycle;

\fill[gray!80!black] (Q1) circle (0.45pt);
\fill[gray!80!black] (Q2) circle (0.45pt);
\fill[gray!80!black] (Q3) circle (0.45pt);

\node[gray!70!black] at (0.58,0.42,-0.82)
{projection in $M$};

% ------------------------------------------------------------
% Special points
% ------------------------------------------------------------

\fill[black] (O) circle (0.55pt);
\node[below left] at (O) {$0$};

\fill[red!70!black] (C1) circle (0.5pt);
\fill[red!70!black] (C2) circle (0.5pt);
\fill[red!70!black] (C3) circle (0.5pt);

\end{tikzpicture}
}

\caption{
Geometric illustration of the toy model depicting the obstruction mechanism in the three-party entropy cone.
The red cone is the HEC, spanned by the three extreme rays along the pairwise angle bisectors.
The green plane \(M:\,u_A+u_B+u_C=0\) is the candidate signal subspace, whose normal vector \((1,1,1)\), shown in purple, lies strictly inside the cone.
A nonzero vector \(u\in M\), shown in orange, defines the perpendicular hyperplane \(P_u:\,u\cdot P_u=0\).
Since this hyperplane contains the interior direction \((1,1,1)\), its intersection with the cone, highlighted as the dash-dotted blue wedge \(P_u\cap\mathrm{HEC}\), cuts through the cone rather than supporting it.
The gray dashed lines project three sample cone points onto \(M\), forming the dashed triangle in the candidate signal plane.
}
\label{fig:balance-obstruction-3d}
\end{figure}

Let us generalize this toy-model argument into genuine, higher dimensional cases. Let \(C\) denote the multi-entropy cone, and let \(M\) be a linear subspace of the multi-entropy signal space. $M$ contains candidate signals that, if sign-definite, will become valid inequalities. A sufficient geometric
criterion for \(M\) to contain no such nontrivial valid inequality is:
\begin{equation}
  M^\perp\cap\operatorname{int}(C)\neq\varnothing.
  \label{eq:no-valid-normal-interior}
\end{equation}
Indeed, if the normal multi-entropy vector \(M^\perp\) intersects the interior of \(C\), then
every nonzero signal in \(M\) defines a perpendicular hyperplane passing through an interior
point of the cone. Such a hyperplane cannot be a supporting hyperplane of
\(C\); instead, it necessarily slices through the cone. Consequently,
no nonzero signal of \(M\) can have a definite holographic sign on \(C\).

{However, testing whether a normal multi-entropy vector lies strictly inside the cone is not
easy in high-dimensional cases. As a result, we use the following projected
version of the same criterion. Let
\(\pi_M:C\to M\) denote the projection onto the candidate signal subspace
\(M\). As is shown later, by choosing a basis of independent signals in $M$, for any multi-entropy vector $c\in C$, the projection outcome $\pi_M(c)$ is a vector whose coordinates are given by the values of each basis signal calculated on $c$. Thus, a practical sufficient certificate for the test is that the projected cone
\(\pi_M(C)\) contains the origin in its interior. Equivalently, if
\(d=\dim M\), it is enough to find \(d+1\) projected cone points
\(y_1,\ldots,y_{d+1}\in M\) and positive rational numbers \(q_i\) such that
\begin{equation}
  q_i>0,\qquad
  \sum_i q_i=1,\qquad
  \sum_i q_i y_i=0,\qquad
  \operatorname{rank}_{\mathrm{aff}}\{y_i\}=d.
  \label{eq:positive-simplex-certificate}
\end{equation}
Here \(\operatorname{rank}_{\mathrm{aff}}\{y_i\}\) denotes the affine rank, namely the rank of the difference vectors \(\{y_i-y_1\}_{i=2}^{d+1}\).
The points \(y_i\) then form a simplex whose interior contains the origin,
represented in Figure~\ref{fig:balance-obstruction-3d} by the dashed triangle on
the plane \(M\). Every hyperplane through the origin that is perpendicular to $M$ cuts this
simplex, and therefore cuts the projected cone rather than supporting it. Hence
every nonzero signal in \(M\) takes both positive and negative values on \(C\),
so no such signal defines a valid holographic inequality.}

We now apply the projected-simplex criterion to the first high-party balanced sector
where the obstruction can be tested explicitly: {the $4$-party balanced subspace in Table~\ref{tab:balance-classes}, with the boundary divided into five subregions.} We use the labels \(A,B,C,D,E\), where \(E\) may be regarded as the
purifier. The relevant eleven-dimensional signal space is
\begin{equation}
V_{4\text{-bal}}:
=
\operatorname{span}\{
M_5,\mathcal E_{AB},\mathcal E_{AC},\mathcal E_{AD},\mathcal E_{AE},
\mathcal E_{BC},\mathcal E_{BD},\mathcal E_{BE},\mathcal E_{CD},\mathcal E_{CE},\mathcal E_{DE}
\}.
\end{equation}
Here \(M_5\) is the fully alternating five-label signal, and
\(\mathcal E_{AB}\) is the pair-splitting signal associated with the pair \(\{A,B\}\),
with the remaining \(\mathcal E_{ij}\) obtained by permutation. Specifically, if \(\Pi_k\) denotes the set of \(k\)-block partitions of \(\{A,B,C,D,E\}\), then
\begin{equation}
M_5
=
-\sum_{\pi\in\Pi_2} S^{(2)}(\pi)
+
2\sum_{\pi\in\Pi_3} S^{(3)}(\pi)
-
6\sum_{\pi\in\Pi_4} S^{(4)}(\pi)
+
24\,S^{(5)}(A:B:C:D:E).
\end{equation}
For the representative pair \(\{A,B\}\), whose complement is \(\{C,D,E\}\), define
\begin{equation}
\Delta_{AB}(\rho)
=
S^{(|\rho|+2)}(A:B:\rho)
-
S^{(|\rho|+1)}(AB:\rho),
\end{equation}
where \(\rho\) is a partition of \(\{C,D,E\}\). The corresponding pair-splitting signal is
\begin{equation}
\mathcal E_{AB}
=
\Delta_{AB}(CDE)
-
\Delta_{AB}(C:DE)
-
\Delta_{AB}(D:CE)
-
\Delta_{AB}(E:CD)
+
2\Delta_{AB}(C:D:E).
\end{equation}
All other \(\mathcal E_{ij}\) are obtained simply by relabeling the five
boundary labels \(A,B,C,D,E\).

We now give an explicit finite certificate for the projected-simplex criterion.
For each graph \(G_i\) in Figure~\ref{fig:n5-top11-graphs},
we compute the complete five-label graph multi-entropy vector by exact multiway-cut minimization
and project it to \(V_{4\text{-bal}}\). In the ordered basis
\begin{equation}
\mathcal B_{4\text{-bal}}
=
(M_5,\mathcal E_{AB},\mathcal E_{AC},\mathcal E_{AD},\mathcal E_{AE},
\mathcal E_{BC},\mathcal E_{BD},\mathcal E_{BE},\mathcal E_{CD},\mathcal E_{CE},\mathcal E_{DE}),
\end{equation}
the projected vectors \(y_i\) are listed in Table~\ref{tab:n5-top11-simplex}.

\begin{table}[H]
\centering
\small
\setlength{\tabcolsep}{3pt}
\renewcommand{\arraystretch}{1.08}
\begin{tabular}{c c l}
\hline
graph & \(n_i\) in \(q_i=n_i/2218\) & projected vector \(y_i\) in the basis \(\mathcal B_{4\text{-bal}}\) \\
\hline
\(G_1\) & 75 & \((21,4,2,2,2,1,1,1,3,3,3)\) \\
\(G_2\) & 293 & \((-2,0,0,0,0,0,0,0,-1,0,0)\) \\
\(G_3\) & 212 & \((-2,0,-1,0,0,0,0,0,0,-1,0)\) \\
\(G_4\) & 873 & \((1,0,0,0,0,0,0,0,0,0,0)\) \\
\(G_5\) & 19 & \((21,1,1,1,4,3,3,2,3,2,2)\) \\
\(G_6\) & 266 & \((-4,0,0,-1,0,0,-1,0,0,0,-1)\) \\
\(G_7\) & 11 & \((-6,-2,1,1,0,-2,-2,1,1,0,0)\) \\
\(G_8\) & 70 & \((7,-1,1,2,0,1,2,1,0,2,0)\) \\
\(G_9\) & 191 & \((-4,0,0,0,-1,0,0,-1,0,-1,0)\) \\
\(G_{10}\) & 19 & \((-13,-2,-2,-3,-2,0,1,0,0,0,0)\) \\
\(G_{11}\) & 3 & \((0,-1,0,1,1,2,-1,-1,0,0,1)\) \\
\(G_{12}\) & 186 & \((-1,-1,0,0,0,-1,0,0,0,0,0)\) \\
\hline
\end{tabular}%
\caption{The twelve projected graph vectors used to show the five-label four-party balanced obstruction. The basis is \(\mathcal B_{4\text{-bal}}=(M_5,\mathcal E_{AB},\mathcal E_{AC},\mathcal E_{AD},\mathcal E_{AE},\mathcal E_{BC},\mathcal E_{BD},\mathcal E_{BE},\mathcal E_{CD},\mathcal E_{CE},\mathcal E_{DE})\).}
\label{tab:n5-top11-simplex}
\end{table}
\begin{figure}[!t]
\centering
\resizebox{0.72\textwidth}{!}{%
\begin{tikzpicture}[
  bdry/.style={circle, fill=black, inner sep=1.25pt},
  intv/.style={circle, fill=black, inner sep=1.35pt},
  edge/.style={line width=0.35pt},
  weight/.style={midway, text=red!70!black, inner sep=0pt, font=\tiny},
  graphlabel/.style={font=\scriptsize}
]
% Graph 1
\begin{scope}[shift={(0.00,0.00)}]
  \node[graphlabel] at (0,1.95) {$G_{1}$};
  \node[bdry] (g1A) at (162:1.55) {};
  \node[above left, font=\scriptsize, inner sep=1pt] at (g1A) {$A$};
  \node[bdry] (g1B) at (90:1.55) {};
  \node[above, font=\scriptsize, inner sep=1pt] at (g1B) {$B$};
  \node[bdry] (g1C) at (18:1.55) {};
  \node[above right, font=\scriptsize, inner sep=1pt] at (g1C) {$C$};
  \node[bdry] (g1D) at (-54:1.55) {};
  \node[below right, font=\scriptsize, inner sep=1pt] at (g1D) {$D$};
  \node[bdry] (g1E) at (-126:1.55) {};
  \node[below left, font=\scriptsize, inner sep=1pt] at (g1E) {$E$};
  \node[intv] (g1u) at (0.00,0.00) {};
  \draw[edge] (g1A) -- node[weight,pos=0.29] {1} (g1C);
  \draw[edge] (g1A) -- node[weight,pos=0.62] {5} (g1D);
  \draw[edge] (g1A) -- node[weight,pos=0.47] {2} (g1u);
  \draw[edge] (g1B) -- node[weight,pos=0.62] {2} (g1D);
  \draw[edge] (g1B) -- node[weight,pos=0.71] {1} (g1E);
  \draw[edge] (g1B) -- node[weight,pos=0.53] {5} (g1u);
  \draw[edge] (g1C) -- node[weight,pos=0.53] {4} (g1u);
  \draw[edge] (g1D) -- node[weight,pos=0.57] {4} (g1E);
  \draw[edge] (g1D) -- node[weight,pos=0.61] {4} (g1u);
  \draw[edge] (g1E) -- node[weight,pos=0.47] {4} (g1u);
\end{scope}

% Graph 2
\begin{scope}[shift={(4.00,0.00)}]
  \node[graphlabel] at (0,1.95) {$G_{2}$};
  \node[bdry] (g2A) at (162:1.55) {};
  \node[above left, font=\scriptsize, inner sep=1pt] at (g2A) {$A$};
  \node[bdry] (g2B) at (90:1.55) {};
  \node[above, font=\scriptsize, inner sep=1pt] at (g2B) {$B$};
  \node[bdry] (g2C) at (18:1.55) {};
  \node[above right, font=\scriptsize, inner sep=1pt] at (g2C) {$C$};
  \node[bdry] (g2D) at (-54:1.55) {};
  \node[below right, font=\scriptsize, inner sep=1pt] at (g2D) {$D$};
  \node[bdry] (g2E) at (-126:1.55) {};
  \node[below left, font=\scriptsize, inner sep=1pt] at (g2E) {$E$};
  \node[intv] (g2u) at (0.00,0.00) {};
  \draw[edge] (g2A) -- node[weight,pos=0.29] {2} (g2C);
  \draw[edge] (g2A) -- node[weight,pos=0.62] {3} (g2D);
  \draw[edge] (g2A) -- node[weight,pos=0.47] {1} (g2u);
  \draw[edge] (g2B) -- node[weight,pos=0.50] {3} (g2C);
  \draw[edge] (g2B) -- node[weight,pos=0.62] {3} (g2D);
  \draw[edge] (g2B) -- node[weight,pos=0.71] {5} (g2E);
  \draw[edge] (g2B) -- node[weight,pos=0.53] {1} (g2u);
  \draw[edge] (g2C) -- node[weight,pos=0.43] {3} (g2D);
  \draw[edge] (g2C) -- node[weight,pos=0.53] {3} (g2u);
  \draw[edge] (g2D) -- node[weight,pos=0.61] {5} (g2u);
  \draw[edge] (g2E) -- node[weight,pos=0.47] {4} (g2u);
\end{scope}

% Graph 3
\begin{scope}[shift={(8.00,0.00)}]
  \node[graphlabel] at (0,1.95) {$G_{3}$};
  \node[bdry] (g3A) at (162:1.55) {};
  \node[above left, font=\scriptsize, inner sep=1pt] at (g3A) {$A$};
  \node[bdry] (g3B) at (90:1.55) {};
  \node[above, font=\scriptsize, inner sep=1pt] at (g3B) {$B$};
  \node[bdry] (g3C) at (18:1.55) {};
  \node[above right, font=\scriptsize, inner sep=1pt] at (g3C) {$C$};
  \node[bdry] (g3D) at (-54:1.55) {};
  \node[below right, font=\scriptsize, inner sep=1pt] at (g3D) {$D$};
  \node[bdry] (g3E) at (-126:1.55) {};
  \node[below left, font=\scriptsize, inner sep=1pt] at (g3E) {$E$};
  \node[intv] (g3u) at (0.00,0.00) {};
  \draw[edge] (g3A) -- node[weight,pos=0.54] {3} (g3B);
  \draw[edge] (g3A) -- node[weight,pos=0.38] {1} (g3C);
  \draw[edge] (g3A) -- node[weight,pos=0.61] {1} (g3u);
  \draw[edge] (g3B) -- node[weight,pos=0.46] {2} (g3C);
  \draw[edge] (g3B) -- node[weight,pos=0.41] {2} (g3u);
  \draw[edge] (g3C) -- node[weight,pos=0.69] {4} (g3u);
  \draw[edge] (g3D) -- node[weight,pos=0.50] {1} (g3E);
  \draw[edge] (g3D) -- node[weight,pos=0.50] {2} (g3u);
  \draw[edge] (g3E) -- node[weight,pos=0.50] {1} (g3u);
\end{scope}

% Graph 4
\begin{scope}[shift={(0.00,-3.80)}]
  \node[graphlabel] at (0,1.95) {$G_{4}$};
  \node[bdry] (g4A) at (162:1.55) {};
  \node[above left, font=\scriptsize, inner sep=1pt] at (g4A) {$A$};
  \node[bdry] (g4B) at (90:1.55) {};
  \node[above, font=\scriptsize, inner sep=1pt] at (g4B) {$B$};
  \node[bdry] (g4C) at (18:1.55) {};
  \node[above right, font=\scriptsize, inner sep=1pt] at (g4C) {$C$};
  \node[bdry] (g4D) at (-54:1.55) {};
  \node[below right, font=\scriptsize, inner sep=1pt] at (g4D) {$D$};
  \node[bdry] (g4E) at (-126:1.55) {};
  \node[below left, font=\scriptsize, inner sep=1pt] at (g4E) {$E$};
  \node[intv] (g4u) at (0.00,0.00) {};
  \draw[edge] (g4A) -- node[weight,pos=0.38] {2} (g4C);
  \draw[edge] (g4A) -- node[weight,pos=0.38] {2} (g4D);
  \draw[edge] (g4A) -- node[weight,pos=0.57] {4} (g4E);
  \draw[edge] (g4A) -- node[weight,pos=0.61] {1} (g4u);
  \draw[edge] (g4B) -- node[weight,pos=0.62] {3} (g4D);
  \draw[edge] (g4B) -- node[weight,pos=0.53] {5} (g4u);
  \draw[edge] (g4C) -- node[weight,pos=0.53] {2} (g4u);
  \draw[edge] (g4D) -- node[weight,pos=0.54] {1} (g4E);
  \draw[edge] (g4D) -- node[weight,pos=0.61] {2} (g4u);
  \draw[edge] (g4E) -- node[weight,pos=0.41] {1} (g4u);
\end{scope}

% Graph 5
\begin{scope}[shift={(4.00,-3.80)}]
  \node[graphlabel] at (0,1.95) {$G_{5}$};
  \node[bdry] (g5A) at (162:1.55) {};
  \node[above left, font=\scriptsize, inner sep=1pt] at (g5A) {$A$};
  \node[bdry] (g5B) at (90:1.55) {};
  \node[above, font=\scriptsize, inner sep=1pt] at (g5B) {$B$};
  \node[bdry] (g5C) at (18:1.55) {};
  \node[above right, font=\scriptsize, inner sep=1pt] at (g5C) {$C$};
  \node[bdry] (g5D) at (-54:1.55) {};
  \node[below right, font=\scriptsize, inner sep=1pt] at (g5D) {$D$};
  \node[bdry] (g5E) at (-126:1.55) {};
  \node[below left, font=\scriptsize, inner sep=1pt] at (g5E) {$E$};
  \node[intv] (g5u) at (0.00,0.00) {};
  \draw[edge] (g5A) -- node[weight,pos=0.50] {4} (g5B);
  \draw[edge] (g5A) -- node[weight,pos=0.62] {4} (g5C);
  \draw[edge] (g5A) -- node[weight,pos=0.53] {5} (g5u);
  \draw[edge] (g5B) -- node[weight,pos=0.62] {5} (g5E);
  \draw[edge] (g5B) -- node[weight,pos=0.53] {4} (g5u);
  \draw[edge] (g5C) -- node[weight,pos=0.54] {4} (g5D);
  \draw[edge] (g5C) -- node[weight,pos=0.62] {5} (g5E);
  \draw[edge] (g5C) -- node[weight,pos=0.61] {4} (g5u);
  \draw[edge] (g5D) -- node[weight,pos=0.43] {1} (g5E);
  \draw[edge] (g5D) -- node[weight,pos=0.41] {4} (g5u);
  \draw[edge] (g5E) -- node[weight,pos=0.61] {2} (g5u);
\end{scope}

% Graph 6
\begin{scope}[shift={(8.00,-3.80)}]
  \node[graphlabel] at (0,1.95) {$G_{6}$};
  \node[bdry] (g6A) at (162:1.55) {};
  \node[above left, font=\scriptsize, inner sep=1pt] at (g6A) {$A$};
  \node[bdry] (g6B) at (90:1.55) {};
  \node[above, font=\scriptsize, inner sep=1pt] at (g6B) {$B$};
  \node[bdry] (g6C) at (18:1.55) {};
  \node[above right, font=\scriptsize, inner sep=1pt] at (g6C) {$C$};
  \node[bdry] (g6D) at (-54:1.55) {};
  \node[below right, font=\scriptsize, inner sep=1pt] at (g6D) {$D$};
  \node[bdry] (g6E) at (-126:1.55) {};
  \node[below left, font=\scriptsize, inner sep=1pt] at (g6E) {$E$};
  \node[intv] (g6u) at (0.00,0.00) {};
  \draw[edge] (g6A) -- node[weight,pos=0.50] {5} (g6B);
  \draw[edge] (g6A) -- node[weight,pos=0.29] {5} (g6C);
  \draw[edge] (g6A) -- node[weight,pos=0.47] {1} (g6u);
  \draw[edge] (g6B) -- node[weight,pos=0.29] {5} (g6D);
  \draw[edge] (g6B) -- node[weight,pos=0.62] {3} (g6E);
  \draw[edge] (g6B) -- node[weight,pos=0.47] {1} (g6u);
  \draw[edge] (g6C) -- node[weight,pos=0.62] {4} (g6E);
  \draw[edge] (g6C) -- node[weight,pos=0.53] {3} (g6u);
  \draw[edge] (g6D) -- node[weight,pos=0.43] {1} (g6E);
  \draw[edge] (g6D) -- node[weight,pos=0.53] {4} (g6u);
  \draw[edge] (g6E) -- node[weight,pos=0.61] {1} (g6u);
\end{scope}
\end{tikzpicture}%
}
\vspace{0.8em}

\resizebox{0.72\textwidth}{!}{%
\begin{tikzpicture}[
  bdry/.style={circle, fill=black, inner sep=1.25pt},
  intv/.style={circle, fill=black, inner sep=1.35pt},
  edge/.style={line width=0.35pt},
  weight/.style={midway, text=red!70!black, inner sep=0pt, font=\fontsize{5}{5}\selectfont},
  graphlabel/.style={font=\scriptsize}
]
% Graph 7
\begin{scope}[shift={(0.00,0.00)}]
  \node[graphlabel] at (0,1.95) {$G_{7}$};
  \node[bdry] (g7A) at (-1.40,1.05) {};
  \node[above left, font=\scriptsize, inner sep=1pt] at (g7A) {$A$};
  \node[bdry] (g7B) at (0.00,1.45) {};
  \node[above, font=\scriptsize, inner sep=1pt] at (g7B) {$B$};
  \node[bdry] (g7C) at (1.40,1.05) {};
  \node[above right, font=\scriptsize, inner sep=1pt] at (g7C) {$C$};
  \node[bdry] (g7D) at (0.95,-1.15) {};
  \node[below right, font=\scriptsize, inner sep=1pt] at (g7D) {$D$};
  \node[bdry] (g7E) at (-0.95,-1.15) {};
  \node[below left, font=\scriptsize, inner sep=1pt] at (g7E) {$E$};
  \node[intv] (g7u) at (-0.45,0.05) {};
  \node[intv] (g7v) at (0.48,0.10) {};
  \draw[edge] (g7A) -- node[weight,pos=0.44] {2} (g7B);
  \draw[edge] (g7A) -- node[weight,pos=0.31] {3} (g7C);
  \draw[edge] (g7A)
    .. controls (-0.83,-0.57) and (0.31,-1.00) ..
    node[weight,pos=0.46] {3} (g7D);
  \draw[edge] (g7A) -- node[weight,pos=0.54] {2} (g7E);
  \draw[edge] (g7A) -- node[weight,pos=0.47] {3} (g7u);
  \draw[edge] (g7B) to[bend right=10] node[weight,pos=0.39] {3} (g7E);
  \draw[edge] (g7B) -- node[weight,pos=0.44] {4} (g7u);
  \draw[edge] (g7B) -- node[weight,pos=0.55] {3} (g7v);
  \draw[edge] (g7C) -- node[weight,pos=0.50] {5} (g7D);
  \draw[edge] (g7C) -- node[weight,pos=0.34] {3} (g7u);
  \draw[edge] (g7C) -- node[weight,pos=0.56] {1} (g7v);
  \draw[edge] (g7D) -- node[weight,pos=0.50] {4} (g7E);
  \draw[edge] (g7D) -- node[weight,pos=0.38] {3} (g7u);
  \draw[edge] (g7E) -- node[weight,pos=0.36] {2} (g7u);
  \draw[edge] (g7E) -- node[weight,pos=0.64] {1} (g7v);
  \draw[edge] (g7u) -- node[weight,pos=0.50] {1} (g7v);
\end{scope}

% Graph 8
\begin{scope}[shift={(4.00,0.00)}]
  \node[graphlabel] at (0,1.95) {$G_{8}$};
  \node[bdry] (g8A) at (-1.40,1.05) {};
  \node[above left, font=\scriptsize, inner sep=1pt] at (g8A) {$A$};
  \node[bdry] (g8B) at (0.00,1.45) {};
  \node[above, font=\scriptsize, inner sep=1pt] at (g8B) {$B$};
  \node[bdry] (g8C) at (1.40,1.05) {};
  \node[above right, font=\scriptsize, inner sep=1pt] at (g8C) {$C$};
  \node[bdry] (g8D) at (0.95,-1.15) {};
  \node[below right, font=\scriptsize, inner sep=1pt] at (g8D) {$D$};
  \node[bdry] (g8E) at (-0.95,-1.15) {};
  \node[below left, font=\scriptsize, inner sep=1pt] at (g8E) {$E$};
  \node[intv] (g8u) at (-0.45,0.05) {};
  \node[intv] (g8v) at (0.48,0.10) {};
  \draw[edge] (g8A) -- node[weight,pos=0.50] {5} (g8u);
  \draw[edge] (g8B) -- node[weight,pos=0.56] {4} (g8C);
  \draw[edge] (g8B) to[bend left=11] node[weight,pos=0.32] {3} (g8D);
  \draw[edge] (g8B) -- node[weight,pos=0.42] {2} (g8u);
  \draw[edge] (g8B) -- node[weight,pos=0.60] {3} (g8v);
  \draw[edge] (g8C) -- node[weight,pos=0.55] {4} (g8D);
  \draw[edge] (g8C)
    .. controls (0.86,-0.59) and (-0.28,-1.01) ..
    node[weight,pos=0.44] {2} (g8E);
  \draw[edge] (g8C) -- node[weight,pos=0.45] {4} (g8v);
  \draw[edge] (g8D) -- node[weight,pos=0.50] {4} (g8E);
  \draw[edge] (g8D) -- node[weight,pos=0.43] {1} (g8u);
  \draw[edge] (g8E) -- node[weight,pos=0.43] {4} (g8u);
  \draw[edge] (g8u) -- node[weight,pos=0.50] {4} (g8v);
\end{scope}

% Graph 9
\begin{scope}[shift={(8.00,0.00)}]
  \node[graphlabel] at (0,1.95) {$G_{9}$};
  \node[bdry] (g9A) at (-1.40,1.05) {};
  \node[above left, font=\scriptsize, inner sep=1pt] at (g9A) {$A$};
  \node[bdry] (g9B) at (0.00,1.45) {};
  \node[above, font=\scriptsize, inner sep=1pt] at (g9B) {$B$};
  \node[bdry] (g9C) at (1.40,1.05) {};
  \node[above right, font=\scriptsize, inner sep=1pt] at (g9C) {$C$};
  \node[bdry] (g9D) at (0.95,-1.15) {};
  \node[below right, font=\scriptsize, inner sep=1pt] at (g9D) {$D$};
  \node[bdry] (g9E) at (-0.95,-1.15) {};
  \node[below left, font=\scriptsize, inner sep=1pt] at (g9E) {$E$};
  \node[intv] (g9u) at (-0.45,0.05) {};
  \node[intv] (g9v) at (0.48,0.10) {};
  \draw[edge] (g9A) -- node[weight,pos=0.43] {5} (g9B);
  \draw[edge] (g9A) -- node[weight,pos=0.27] {3} (g9C);
  \draw[edge] (g9A) -- node[weight,pos=0.50] {3} (g9E);
  \draw[edge] (g9A) -- node[weight,pos=0.50] {2} (g9u);
  \draw[edge] (g9B) -- node[weight,pos=0.58] {3} (g9C);
  \draw[edge] (g9B) -- node[weight,pos=0.43] {1} (g9u);
  \draw[edge] (g9C) -- node[weight,pos=0.50] {1} (g9D);
  \draw[edge] (g9C)
    .. controls (0.86,-0.59) and (-0.28,-1.01) ..
    node[weight,pos=0.44] {1} (g9E);
  \draw[edge] (g9C) -- node[weight,pos=0.37] {1} (g9u);
  \draw[edge] (g9C) -- node[weight,pos=0.49] {1} (g9v);
  \draw[edge] (g9D) -- node[weight,pos=0.49] {1} (g9E);
  \draw[edge] (g9D) -- node[weight,pos=0.39] {3} (g9u);
  \draw[edge] (g9E) -- node[weight,pos=0.43] {5} (g9u);
  \draw[edge] (g9u) -- node[weight,pos=0.50] {3} (g9v);
\end{scope}

% Graph 10
\begin{scope}[shift={(0.00,-3.80)}]
  \node[graphlabel] at (0,1.95) {$G_{10}$};
  \node[bdry] (g10A) at (-1.40,1.05) {};
  \node[above left, font=\scriptsize, inner sep=1pt] at (g10A) {$A$};
  \node[bdry] (g10B) at (0.00,1.45) {};
  \node[above, font=\scriptsize, inner sep=1pt] at (g10B) {$B$};
  \node[bdry] (g10C) at (1.40,1.05) {};
  \node[above right, font=\scriptsize, inner sep=1pt] at (g10C) {$C$};
  \node[bdry] (g10D) at (0.95,-1.15) {};
  \node[below right, font=\scriptsize, inner sep=1pt] at (g10D) {$D$};
  \node[bdry] (g10E) at (-0.95,-1.15) {};
  \node[below left, font=\scriptsize, inner sep=1pt] at (g10E) {$E$};
  \node[intv] (g10u) at (-0.45,0.05) {};
  \node[intv] (g10v) at (0.48,0.10) {};
  \draw[edge] (g10A) -- node[weight,pos=0.43] {3} (g10B);
  \draw[edge] (g10A)
    .. controls (-0.83,-0.57) and (0.31,-1.00) ..
    node[weight,pos=0.38] {2} (g10D);
  \draw[edge] (g10A) -- node[weight,pos=0.52] {4} (g10E);
  \draw[edge] (g10A) -- node[weight,pos=0.35] {5} (g10u);
  \draw[edge] (g10A) -- node[weight,pos=0.63] {5} (g10v);
  \draw[edge] (g10B) to[bend left=12] node[weight,pos=0.36] {3} (g10D);
  \draw[edge] (g10B) -- node[weight,pos=0.62] {5} (g10v);
  \draw[edge] (g10C) -- node[weight,pos=0.54] {5} (g10D);
  \draw[edge] (g10C) -- node[weight,pos=0.24] {1} (g10u);
  \draw[edge] (g10C) -- node[weight,pos=0.47] {3} (g10v);
  \draw[edge] (g10D) -- node[weight,pos=0.50] {5} (g10E);
  \draw[edge] (g10D) -- node[weight,pos=0.43] {4} (g10v);
  \draw[edge] (g10E) -- node[weight,pos=0.32] {3} (g10v);
  \draw[edge] (g10u) -- node[weight,pos=0.50] {3} (g10v);
\end{scope}

% Graph 11
\begin{scope}[shift={(4.00,-3.80)}]
  \node[graphlabel] at (0,1.95) {$G_{11}$};
  \node[bdry] (g11A) at (-1.40,1.05) {};
  \node[above left, font=\scriptsize, inner sep=1pt] at (g11A) {$A$};
  \node[bdry] (g11B) at (0.00,1.45) {};
  \node[above, font=\scriptsize, inner sep=1pt] at (g11B) {$B$};
  \node[bdry] (g11C) at (1.40,1.05) {};
  \node[above right, font=\scriptsize, inner sep=1pt] at (g11C) {$C$};
  \node[bdry] (g11D) at (0.95,-1.15) {};
  \node[below right, font=\scriptsize, inner sep=1pt] at (g11D) {$D$};
  \node[bdry] (g11E) at (-0.95,-1.15) {};
  \node[below left, font=\scriptsize, inner sep=1pt] at (g11E) {$E$};
  \node[intv] (g11u) at (-0.45,0.05) {};
  \node[intv] (g11v) at (0.48,0.10) {};
  \draw[edge] (g11A) -- node[weight,pos=0.52] {2} (g11E);
  \draw[edge] (g11A) -- node[weight,pos=0.48] {3} (g11v);
  \draw[edge] (g11B) to[bend left=13] node[weight,pos=0.55] {1} (g11D);
  \draw[edge] (g11B) to[bend right=13] node[weight,pos=0.55] {4} (g11E);
  \draw[edge] (g11B) -- node[weight,pos=0.38] {5} (g11u);
  \draw[edge] (g11B) -- node[weight,pos=0.62] {5} (g11v);
  \draw[edge] (g11C) -- node[weight,pos=0.55] {2} (g11D);
  \draw[edge] (g11C)
    .. controls (0.86,-0.59) and (-0.28,-1.01) ..
    node[weight,pos=0.44] {2} (g11E);
  \draw[edge] (g11C) -- node[weight,pos=0.32] {3} (g11u);
  \draw[edge] (g11C) -- node[weight,pos=0.58] {1} (g11v);
  \draw[edge] (g11D) -- node[weight,pos=0.50] {3} (g11E);
  \draw[edge] (g11D) -- node[weight,pos=0.44] {3} (g11v);
  \draw[edge] (g11E) -- node[weight,pos=0.44] {4} (g11v);
\end{scope}

% Graph 12
\begin{scope}[shift={(8.00,-3.80)}]
  \node[graphlabel] at (0,1.95) {$G_{12}$};
  \node[bdry] (g12A) at (-1.40,1.05) {};
  \node[above left, font=\scriptsize, inner sep=1pt] at (g12A) {$A$};
  \node[bdry] (g12B) at (0.00,1.45) {};
  \node[above, font=\scriptsize, inner sep=1pt] at (g12B) {$B$};
  \node[bdry] (g12C) at (1.40,1.05) {};
  \node[above right, font=\scriptsize, inner sep=1pt] at (g12C) {$C$};
  \node[bdry] (g12D) at (0.95,-1.15) {};
  \node[below right, font=\scriptsize, inner sep=1pt] at (g12D) {$D$};
  \node[bdry] (g12E) at (-0.95,-1.15) {};
  \node[below left, font=\scriptsize, inner sep=1pt] at (g12E) {$E$};
  \node[intv] (g12u) at (-0.45,0.05) {};
  \node[intv] (g12v) at (0.48,0.10) {};
  \draw[edge] (g12A) -- node[weight,pos=0.43] {5} (g12B);
  \draw[edge] (g12A)
    .. controls (-0.83,-0.57) and (0.31,-1.00) ..
    node[weight,pos=0.46] {3} (g12D);
  \draw[edge] (g12A) -- node[weight,pos=0.47] {2} (g12u);
  \draw[edge] (g12B) -- node[weight,pos=0.57] {5} (g12C);
  \draw[edge] (g12B) to[bend right=12] node[weight,pos=0.30] {5} (g12E);
  \draw[edge] (g12B) -- node[weight,pos=0.38] {4} (g12u);
  \draw[edge] (g12B) -- node[weight,pos=0.62] {4} (g12v);
  \draw[edge] (g12C)
    .. controls (0.86,-0.59) and (-0.28,-1.01) ..
    node[weight,pos=0.44] {4} (g12E);
  \draw[edge] (g12C) -- node[weight,pos=0.30] {2} (g12u);
  \draw[edge] (g12D) -- node[weight,pos=0.44] {4} (g12v);
  \draw[edge] (g12E) -- node[weight,pos=0.35] {2} (g12u);
  \draw[edge] (g12E) -- node[weight,pos=0.66] {1} (g12v);
  \draw[edge] (g12u) -- node[weight,pos=0.50] {3} (g12v);
\end{scope}
\end{tikzpicture}%
}
\caption{Graph representatives \(G_1,\ldots,G_{12}\) for the projected four-party balanced vectors in Table~\ref{tab:n5-top11-simplex}. Boundary terminals \(A,\ldots,E\) are labeled as solid dots; unlabeled solid dots are internal vertices. Edge labels denote their weights.}
\label{fig:n5-top11-graphs}
\end{figure}

These twelve points form a simplex whose interior contains the origin. Indeed,
with
\begin{equation}
(n_1,\ldots,n_{12})
=
(75,293,212,873,19,266,11,70,191,19,3,186),
\end{equation}
one has
\begin{equation}
\sum_{i=1}^{12} n_i=2218,\qquad
\sum_{i=1}^{12} n_i y_i=0.
\end{equation}
The exact ranks are
\begin{equation}
\operatorname{rank}\{y_1,\ldots,y_{12}\}=11,\qquad
\operatorname{rank}\{(y_1,1),\ldots,(y_{12},1)\}=12,
\end{equation}
where the latter equation calculates the affine rank.
Thus the \(y_i\) are affinely independent in the eleven-dimensional projected
space, and the origin lies in the interior of the simplex with vertices
\(y_1,\ldots,y_{12}\). By the projected-simplex criterion,
\begin{equation}
V_{4\text{-bal}}^\perp
\cap
\operatorname{int}\!\left(C^{\rm HMEC}_5\right)
\neq
\varnothing .
\end{equation}
Equivalently, every nonzero signal in \(V_{4\text{-bal}}\) takes both signs on
holographic graph data. Therefore, no HMEI resides entirely in this eleven-dimensional, four-party balanced sector. {Note that two symmetric dimensions from this sector have already been discussed in \cite{Iizuka:2025caq}, while here we analyze the most general eleven-dimensional signal subspace.} This gives the \(n=5\)
evidence of the balanced-but-not-too-balanced conjecture.

To sum up, this conjecture is consistent with the known organization of HEIs in the
information basis~\cite{HubenyRangamaniRotaEntropyArrangement,HeHubenyRangamaniSuperbalance}. It is
also compatible with the $\HMEC_5$ evidence.
The finite HMEC obstruction suggests that holographically sign-definite entropy
inequalities are associated with intermediate-party balanceness.

\section{Conclusion}
\label{sec:conclusion}

We have developed the holographic multi-entropy cone as a refinement of the standard holographic entropy cone. In Section~\ref{sec:framework}, we established the general framework of the HMEC. In analogy with the HEC construction, we introduced holographic multi-entropy vectors, developed the corresponding weighted-graph model, and established the equivalence between graph models and time-reflection-symmetric bulk geometries. We further proved that the HMEC is a rational polyhedral cone and formulated the multicontraction map proof method, which provides systematic certificates for valid holographic multi-entropy inequalities.

Section~\ref{sec:low-dimensional} applies this HMEC framework to the first
nontrivial cases. In the total-label convention, with the purifier included
among the \(n\) labels, we computed the complete cones \(\HMEC_3\) and
\(\HMEC_4\). The former cone has \(4\) concrete facets in $2$ orbits and \(4\) extreme rays in $2$ orbits, while the latter has \(27\) facets in $7$ orbits and \(49\) extreme rays in $9$ orbits. All of these above extreme rays admit explicit
graph realizations. Section~\ref{sec:low-dimensional} also explained why a
full \(\HMEC_5\) facet enumeration is already a substantially harder problem:
its natural superbalanced sector where HMEIs are searched is comparable in dimension to the
corresponding sector of the standard six-region HEC problem, while the graph
realization of extreme rays must also match the full partition-labeled multiway-cut data.

In Section~\ref{sec:conjectures}, two guiding conjectures are formulated suggested by
these results. The first is that standard HEC facets arise from convex
combinations of HMEC facets in which the multi-entropy components cancel. The second is the
balanced-but-not-too-balanced principle: HMEC facets should occupy
intermediate-party balance classes rather than being arbitrarily high party balanced. The complete \(n=3,4\) data and the partial \(n=5\) results
are consistent with both conjectures.

The main next steps are to prove or refine these conjectures and to construct
systematic series of HMEIs. We have already found two such series, suggesting
that the low-dimensional facets computed in this paper are the first examples of a
larger structure. We also plan to use the Hasse diagram of Sperner hypergraphs
to study how linear combinations of entropies and multi-entropies
probe multipartite entanglement. This perspective will be developed in our
forthcoming work
\cite{JuSunZhaoResolutionSignal}.

\section*{Acknowledgements}

We thank Bart{\l}omiej Czech, Yichen Feng, Jonathan Harper, Ya-Wen Sun, and
Tadashi Takayanagi for useful discussions.
Part of this work was completed while X.-X.~Ju was visiting the Yukawa
Institute for Theoretical Physics (YITP), Kyoto University, under the Young
International Researcher Invitation Program. X.-X.~Ju thanks YITP for its
hospitality.
This work was supported by the National Natural Science Foundation of China
under Grant No. 12575068.

\bibliographystyle{JHEP}
\bibliography{refs}

\end{document}